\documentclass[10pt]{article}
\usepackage[T1]{fontenc}
\usepackage[utf8]{inputenc}
\usepackage{lmodern}
\usepackage{microtype}
\usepackage{fullpage}
\usepackage{authblk}
\usepackage{amsmath,amssymb,amsthm,mathtools}
\usepackage{aliascnt}
\usepackage{bm}
\usepackage{dsfont}
\usepackage{mathpazo}
\usepackage{booktabs}
\usepackage{tabularx}
\usepackage{longtable}
\usepackage{array}
\usepackage{xcolor}
\definecolor{MainBlue}{HTML}{173B57}
\definecolor{AccentTeal}{HTML}{247B7B}
\definecolor{SoftBlue}{HTML}{F1F6FA}
\definecolor{SoftTeal}{HTML}{EFF8F7}
\definecolor{SoftGray}{HTML}{F5F6F7}
\definecolor{RuleGray}{HTML}{D8DEE4}
\definecolor{WarmGold}{HTML}{A66F18}
\usepackage{enumitem}
\usepackage{mdframed}
\usepackage[numbers,sort&compress]{natbib}
\usepackage{hyperref}
\usepackage[nameinlink,noabbrev]{cleveref}

\hypersetup{
colorlinks=true,
linkcolor=blue!55!black,
citecolor=blue!55!black,
urlcolor=blue!55!black,
pdfborder={0 0 0}
}
\allowdisplaybreaks
\theoremstyle{plain}
\newtheorem{theorem}{Theorem}[section]
\newaliascnt{proposition}{theorem}
\newtheorem{proposition}[proposition]{Proposition}
\aliascntresetthe{proposition}
\newaliascnt{lemma}{theorem}
\newtheorem{lemma}[lemma]{Lemma}
\aliascntresetthe{lemma}
\newaliascnt{corollary}{theorem}
\newtheorem{corollary}[corollary]{Corollary}
\aliascntresetthe{corollary}
\theoremstyle{definition}
\newaliascnt{definition}{theorem}
\newtheorem{definition}[definition]{Definition}
\aliascntresetthe{definition}
\theoremstyle{remark}
\newaliascnt{remark}{theorem}
\newtheorem{remark}[remark]{Remark}
\aliascntresetthe{remark}

\mdfdefinestyle{theoremframe}{
backgroundcolor=SoftBlue,
linecolor=MainBlue,
linewidth=.65pt,
innerleftmargin=7pt,innerrightmargin=7pt,
innertopmargin=6pt,innerbottommargin=6pt,
needspace=6\baselineskip,
skipabove=9pt,skipbelow=9pt
}
\mdfdefinestyle{propositionframe}{
backgroundcolor=SoftTeal,
linecolor=AccentTeal,
linewidth=.6pt,
innerleftmargin=7pt,innerrightmargin=7pt,
innertopmargin=6pt,innerbottommargin=6pt,
needspace=6\baselineskip,
nobreak=true,
skipabove=8pt,skipbelow=8pt
}
\mdfdefinestyle{lemmaframe}{
backgroundcolor=SoftGray,
linecolor=RuleGray!60!black,
linewidth=.5pt,
innerleftmargin=7pt,innerrightmargin=7pt,
innertopmargin=5pt,innerbottommargin=5pt,
needspace=5\baselineskip,
nobreak=true,
skipabove=8pt,skipbelow=8pt
}
\surroundwithmdframed[style=theoremframe]{theorem}
\surroundwithmdframed[style=propositionframe]{proposition}
\surroundwithmdframed[style=propositionframe]{corollary}
\surroundwithmdframed[style=lemmaframe]{lemma}
\surroundwithmdframed[style=theoremframe]{definition}

\crefname{theorem}{theorem}{theorems}
\Crefname{theorem}{Theorem}{Theorems}
\crefname{proposition}{proposition}{propositions}
\Crefname{proposition}{Proposition}{Propositions}
\crefname{lemma}{lemma}{lemmas}
\Crefname{lemma}{Lemma}{Lemmas}
\crefname{corollary}{corollary}{corollaries}
\Crefname{corollary}{Corollary}{Corollaries}
\crefname{definition}{definition}{definitions}
\Crefname{definition}{Definition}{Definitions}
\crefname{remark}{remark}{remarks}
\Crefname{remark}{Remark}{Remarks}
\crefname{equation}{equation}{equations}
\Crefname{equation}{Equation}{Equations}

\newcommand{\id}{\operatorname{id}}
\newcommand{\idop}{\mathds{1}}
\newcommand{\Lin}{\mathcal{L}}
\newcommand{\Dens}{\mathcal{D}}
\newcommand{\Tr}{\operatorname{Tr}}

\newcommand{\spec}{\operatorname{spec}}
\newcommand{\supp}{\operatorname{supp}}
\newcommand{\cJ}{\mathcal{J}}
\newcommand{\cD}{\mathcal{D}}
\newcommand{\cN}{\mathcal{N}}
\newcommand{\cM}{\mathcal{M}}

\newcommand{\cH}{\mathcal{H}}
\newcommand{\HS}{\mathrm{HS}}
\newcommand{\Ic}{I_{\mathrm c}}
\newcommand{\ad}{\dagger}
\newcommand{\ketbra}[2]{|#1\rangle\!\langle #2|}
\newcommand{\REord}{\succeq_{\mathrm{cRE}}}
\newcommand{\BKMord}{\succeq_{\mathrm{cBKM}}}

\newcommand{\Span}{\operatorname{span}}

\numberwithin{equation}{section}

\title{\Large \textbf{Quantum Incapacity beyond No-Cloning and PPT Mechanisms}}
\author[1]{Chengkai Zhu}
\author[2]{Xin Wang}
\affil[1]{QudeLeap Research, Shanghai 200030, China}
\affil[2]{The Hong Kong University of Science and Technology (Guangzhou), Guangdong 511453, China}

\begin{document}

\maketitle

\begin{abstract}
We show an explicit qutrit channel whose private and quantum capacities both vanish, although it is neither antidegradable nor positive under partial transposition (PPT). This resolves two longstanding open problems in quantum information theory: whether zero quantum capacity can occur outside the PPT and antidegradable classes, and whether antidegradability is the only nontrivial mechanism that forces the private capacity to vanish. For qutrit systems $A$ and $B$, the channel is
\[
\Lambda_{A\to B}(X)=
\frac{1}{2}X+\frac{1}{4}\left(\operatorname{Tr}(X)\idop_{B}-X^{\mathsf T}
\right).
\]
We use the established Bogoliubov--Kubo--Mori/relative-entropy comparison results to show that its optimized coherent and private information vanish at every blocklength, and hence
$P(\Lambda)=Q(\Lambda)=0$. Nevertheless, the Choi state of $\Lambda$ is not PPT, and $\Lambda$ is not antidegradable. 
The underlying mechanism is an observable-level relaxation of antidegradability. Specifically, for any input state and receiver observable, the complementary output can provide an unbiased expectation value of the receiver output with no larger variance. Our results show that this mechanism is strictly weaker than antidegradability and implies the complete less-noisy order of the complement. Therefore, the constructed channel establishes a new class of zero-capacity channels beyond the conventional PPT and no-cloning mechanisms.
\end{abstract}

\tableofcontents

\section{Introduction}
\label{sec:introduction}

The quantum capacity $Q(\cN)$ is the optimal asymptotic rate of reliable quantum communication over a channel $\cN$, while the private capacity $P(\cN)$ is the optimal rate of private classical communication.  Both are regularizations of single-letter information
quantities~\cite{Lloyd1997,Devetak2005,DevetakShor2005}.  This regularization is essential.  For every prescribed blocklength, there are channels whose coherent information vanishes up to that blocklength but whose quantum capacity is positive~\cite{CubittEtAl2015}; private information can likewise remain superadditive for an arbitrarily large number of channel uses~\cite{ElkoussStrelchuk2015}.  Consequently, no calculation at a fixed finite blocklength determines either capacity in general.

Most exact zero-quantum-capacity proofs invoke one of two structural mechanisms.  A PPT channel cannot distribute distillable entanglement~\cite{HorodeckiEtAl2000}.  For an antidegradable channel, the environment can simulate the receiver, so data processing forces both the
quantum and private capacities to vanish~\cite{SmithSmolin2012,Watanabe2012}. Neither argument applies when the Choi state is NPT and the receiver output is not obtainable by post-processing the environment.

Finding an explicit zero-quantum-capacity channel outside both structural classes has therefore become a fundamental open problem~\cite{SmithSmolin2012,SinghDatta2022}.  Recent work on simple nondegradable channels has revealed pronounced superadditivity and other capacity phenomena beyond the degradable setting~\cite{LeditzkyEtAl2023,LeditzkyEtAl2023Nonadditivity}.  In the constructions closest to the setting considered here, however, the prospective zero- or asymptotically vanishing-capacity regions remain conditional on unresolved weak-additivity or spin-alignment assumptions and therefore do not furnish an unconditional fixed example~\cite{SmithWu2025,WuWang2025}. A complementary approach replaces
physical simulation by comparing the information available to the receiver and the environment through quantum less-noisy and more-capable orders~\cite{Watanabe2012}.  Complete versions of these orders admit relative-entropy characterizations and tensorization, while their
BKM-Hessian formulation makes them tractable for our channel-specific certificate~\cite{HircheRouzeFranca2022,BelzigEtAl2025}.  This framework supplies the order-theoretic language for our proof, but not the unconditional all-blocklength separation established below.  We resolve the existence question by exhibiting such a fixed channel with both zero private and zero quantum capacity.  In fact, its optimized private and coherent information vanish exactly at every blocklength.

Let $A$ and $B$ be qutrit systems whose computational bases are identified,
and let $X_A\in\Lin(\cH_A)$.  We consider the $x=\tfrac12$ member of the
noisy Werner--Holevo family,
\begin{equation}
\Lambda_{A\to B}(X_A)
=\frac12X_A+\frac14
\bigl(\Tr(X_A)\idop_B-X_A^{\mathsf T}\bigr)
\label{eq:intro-channel}
\end{equation}
The family was introduced and analyzed in earlier work.  That analysis
proved antidegradability for $x\ge4/7$ and, at $x=\tfrac12$, left a gap
between its lower and upper bounds on quantum capacity~\cite{RoofehKarimipour2024}.  The following theorem resolves this parameter at every blocklength. To the best of our knowledge, this is the first construction of an explicit finite-dimensional channel that has both zero private and zero quantum capacity, while being neither PPT nor antidegradable.

\Needspace{12\baselineskip}
\begin{theorem}[Main result]
\label{thm:intro-main}
For the qutrit channel $\Lambda$ in Eq.~\eqref{eq:intro-channel},
\begin{equation}
    P^{(1)}(\Lambda^{\otimes n})
    =Q^{(1)}(\Lambda^{\otimes n})=0
    \qquad\text{for every }n\ge1,
    \label{eq:intro-all-block-zero}
\end{equation}
and consequently
\begin{equation}
    P(\Lambda)=Q(\Lambda)=0.
    \label{eq:intro-capacity-zero}
\end{equation}
Nevertheless, $\Lambda$ is neither PPT nor antidegradable.  The all-blocklength
capacity and comparison statements are proved in
\Cref{thm:qutrit-zero-capacity}; the PPT and antidegradability separations are
established in \Cref{sec:separation}.
\end{theorem}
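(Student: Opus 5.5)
The plan has three parts: a complementary-channel comparison for the capacity claims, and two separate computations for the non-PPT and non-antidegradability claims.

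\textbf{Capacity part.} First I will write down an explicit Stinespring isometry for $\Lambda$. The channel is $\tfrac12\,\id + \tfrac12\,\mathcal W$, where $\mathcal W(X)=\tfrac12(\Tr X\,\idop - X^{\mathsf T})$ is the antisymmetric Werner--Holevo channel, with Kraus operators $\tfrac{1}{\sqrt2}(\ketbra{i}{j}-\ketbra{j}{i})$. Combining these Kraus operators with $\tfrac{1}{\sqrt2}\idop$ gives an isometry into an environment $E$ of dimension $4$ (one identity slot plus three antisymmetric slots). From this I obtain the complementary channel $\Lambda^c_{A\to E}$ in closed form.

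The goal is then to show that $\Lambda^c$ completely dominates $\Lambda$ in the less-noisy order. I will use the BKM/relative-entropy characterization cited in the introduction~\cite{HircheRouzeFranca2022,BelzigEtAl2025}. That characterization reduces the complete less-noisy order to an infinitesimal statement: for every input $\rho$ on $A\otimes R$ and every traceless Hermitian perturbation $X$, the BKM second variation of the relative entropy through $\Lambda^c$ must dominate that through $\Lambda$.

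I intend to certify this with the observable-level mechanism described in the abstract. The aim is to construct a map $G$ sending receiver observables $O_B$ to environment observables $\tilde O_E=G^\dagger(O_B)$ satisfying two conditions:
\[
\Tr\bigl(\Lambda^c(\rho)\tilde O_E\bigr)=\Tr\bigl(\Lambda(\rho)O_B\bigr),
\qquad
\mathrm{Var}_{\Lambda^c(\rho)}(\tilde O_E)\le \mathrm{Var}_{\Lambda(\rho)}(O_B),
\]
for all $\rho$, and also with a reference system tensored on. The first condition is unbiasedness; the second is variance domination. By the variational (dual) form of the BKM/$\chi^2$ metric, these two conditions give the Hessian inequality. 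The cited tensorization result then yields $\Lambda^{c\,\otimes n}\succeq\Lambda^{\otimes n}$ in the complete less-noisy order for every $n$.

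Less-noisy domination of the complement gives $I(V;E)\ge I(V;B)$ for every ensemble. Hence $P^{(1)}(\Lambda^{\otimes n})\le 0$. Since $Q^{(1)}\le P^{(1)}$, and both quantities are nonnegative (take a pure product input), both are $0$ for every $n$. Regularizing gives $P(\Lambda)=Q(\Lambda)=0$.

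\textbf{Main obstacle.} The hard step is constructing the unbiased estimator $G$ with the variance bound uniformly over all inputs. I would exploit the $U\otimes\bar U$ covariance of $\Lambda$: $\Lambda(UXU^\dagger)=U\Lambda(X)U^\dagger$, with a corresponding covariance on $E$. This lets me take $G$ equivariant, so that it is fixed by a few scalar parameters on the irreducible components of observables: the identity part and the traceless part, which splits into the symmetric and antisymmetric $3\times 3$ parts under the transpose.

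The unbiasedness condition fixes these parameters. The variance inequality then becomes a positivity condition on a quadratic form in $(\rho,O)$, and I would check it by reducing to extremal cases via convexity. A concern is whether the inequality holds with reference systems attached (the ``complete'' version). I would try to prove that version directly, writing the variance in the Kraus/Stinespring picture so that the difference of variances becomes $\Tr(\rho\,\Phi(O^2))-\Tr(\rho\,\Psi(O)^2)$ for suitable maps $\Phi,\Psi$, and then showing that the associated operator inequality is completely positive.

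\textbf{Separations.} For non-PPT, I will compute the Choi state $J=\tfrac12\Phi^+ + \tfrac12\cdot\tfrac{1}{6}(\idop-F)$, where $F$ is the swap operator. Its partial transpose is $\tfrac12\cdot\tfrac{F}{3}+\tfrac{1}{12}(\idop-3\Phi^+)$, up to normalization conventions. On the antisymmetric subspace $F=-1$, this evaluates to $-\tfrac16+\tfrac1{12}<0$, so the partial transpose has a negative eigenvalue.

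For non-antidegradability, I would show that the symmetric extension of the Choi state fails. Using covariance, I would twirl any putative antidegrading map into a two-parameter covariant family. An SDP/duality check then exhibits a witness $W$ with $\Tr(WJ)<0$ that is nonnegative on all $1$-extendible states. Alternatively, I would cite the known antidegradability threshold $x\ge 4/7$~\cite{RoofehKarimipour2024} together with an explicit argument that this threshold is sharp.
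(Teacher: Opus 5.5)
Your outline follows the paper's architecture. It uses a four-dimensional environment built from $\{\idop/\sqrt2,\,A_\mu/2\}$ and an unbiased observable lift whose reference-stable variance domination is certified by complete positivity of the outer defect. It then passes through the BKM bridge and tensorization to the capacity conclusions; your shortcut $Q^{(1)}\le P^{(1)}$ is fine. The partial-transpose value $-\tfrac1{12}$ on the antisymmetric subspace also agrees with the paper.

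\textbf{The gap in the capacity part.} The step that carries all the content is never supplied: you must exhibit a lift $\cJ$ whose defect $\Delta_{\cJ}(\tau)=L_{\Lambda(\tau)}-\cJ^{\ad}L_{\Lambda^c(\tau)}\cJ$ is completely positive. The recipe you give for finding it is also incorrect.
\begin{enumerate}
\item The covariance $\Lambda(UXU^{\ad})=U\Lambda(X)U^{\ad}$ holds only for real orthogonal $U$, because the transpose turns $U$ into $\bar U$. The symmetry is therefore $SO(3)$, which is consistent with your sector split.
\item Under $SO(3)$ one has $\cH_E\cong V_0\oplus V_1$. So $\Lin(\cH_E)$ carries spin $0$ twice and spin $1$ three times, while $\Lin(\cH_B)$ carries spins $0,1,2$ once each. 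Unbiasedness therefore does \emph{not} fix the equivariant lift. For instance, $\cJ(\idop_B)=\operatorname{diag}(c,d,d,d)$ is unbiased for every $c+d=2$, and in the skew sector only one linear relation links the off-diagonal and block coefficients.
\item The choice is decisive. At $\rho=\idop_A/3$ one has $\Lambda^c(\rho)=\operatorname{diag}(\tfrac12,\tfrac16,\tfrac16,\tfrac16)$, so $\operatorname{Var}(\cJ(\idop_B))=(c-d)^2/4$ while $\operatorname{Var}(\idop_B)=0$. Every choice other than $c=d=1$ already fails, including the minimum-norm (pseudoinverse) lift $c=\tfrac32$, $d=\tfrac12$.
\end{enumerate}
So the variance condition, not unbiasedness, must select the lift, and complete positivity of its defect must then actually be proved. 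Reducing to extremal $\rho$ by convexity does not give this reference-complete, operator-valued statement. The paper supplies the specific map in \eqref{eq:J-lift} and the exact factorization $\Delta(\tau)=\tfrac29\sum_\mu G_\mu\tau G_\mu^{\ad}$ (\Cref{prop:cp-defect}). Without some such explicit certificate, the capacity claim is unproven.

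A smaller point: variance domination only controls the left weight $L_\sigma$. Reaching the BKM metric also needs the right-weight inequality, obtained from adjoint preservation, together with $\Gamma_\theta^{-1}=\int_0^1((1-t)L_\theta+tR_\theta)^{-1}dt$, as in \Cref{lem:left-weight-to-bkm}. Your phrase ``BKM/$\chi^2$'' skips this.

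\textbf{The gap in the non-antidegradability part.} This part has the same status. What is needed is an explicit $W$ with $\mathfrak H(W)\succeq0$ on $AB_1B_2$ and $\Tr(W\omega^\Lambda)<0$, that is, a witness against a symmetric \emph{two}-extension; every state is trivially $1$-extendible.
\begin{itemize}
\item Citing a sufficient condition for antidegradability, such as the $x\ge4/7$ result, says nothing below its threshold. Proving that threshold sharp is precisely this witness computation.
\item Twirling reduces $W$ to $w_0\Phi+w_1P^-+P^{\mathrm{st}}$, but positivity of the symmetrized lift still has to be checked on the $27$-dimensional space.
\item The paper does this by spin recoupling into four small blocks. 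It obtains the feasible region $w_1\ge0$, $w_0\ge-4w_1/(5w_1+3)$, and the optimum $\Tr(W\omega^\Lambda)=(4\sqrt3-7)/10<0$.
\end{itemize}
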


The mechanism behind \Cref{thm:intro-main} is weaker than antidegradability as a physical statement, but strong enough as an information comparison. Specifically, we identify a property of a quantum channel $\cN_{A\to B}$ from Alice to Bob with a complementary channel $\cN^c_{A\to E}$ from Alice to Eve. For every observable on $B$, Eve can reproduce its expectation value using a corresponding observable on $E$, with variance no greater than Bob’s, and this comparison remains valid in the presence of an arbitrary quantum reference system. We show that this property yields a comparison of the Bogoliubov–Kubo–Mori (BKM) metrics, which leads to a relative-entropy inequality. Consequently, the private and coherent information vanish at every blocklength, implying that both the private and quantum capacities of $\cN$ are zero.

This result resolves two longstanding questions concerning zero-capacity quantum channels.
\begin{enumerate}
\item
Smith and Smolin explicitly asked whether there exist quantum channels with zero quantum capacity that are neither antidegradable nor PPT~\cite{SmithSmolin2012}, building on the earlier classification discussion of Smith and Yard~\cite{SmithYard2008}. Our result answers this question affirmatively by exhibiting a channel $\Lambda$ such that
\[Q(\Lambda)=0,
\qquad
\Lambda \text{ is neither PPT nor antidegradable}.\]

\item
Antidegradable channels are well known to have zero private capacity, $P(\cN)=0$. Whether they are the only nontrivial quantum channels with this property was explicitly highlighted as an open question by Buscemi, Datta, and Strelchuk~\cite{Buscemi2014}. Our result answers this question in the negative: private communication can have zero capacity even in the absence of the no-cloning mechanism associated with antidegradability.

\end{enumerate}

The remaining paper is organized as follows.
\Cref{sec:general-framework} includes the less-noisy tensorization and capacity consequences established in prior work~\cite{Watanabe2012,HircheRouzeFranca2022}, together with the established BKM/relative-entropy bridge~\cite{GaoRouze2022,BelzigEtAl2025}. Our new technical contribution is a complete variance-domination theorem and its equivalent signed-lift CP-defect certificate. \Cref{sec:certificate} constructs an explicit nonpositive lift $\cJ:\Lin(\cH_B)\to\Lin(\cH_E)$ satisfying $\cJ^{\ad}\circ\Lambda^c=\Lambda$ for the chosen noisy Werner-Holevo channel. Its exact rank-three defect factorization verifies complete variance domination, after which tensorization gives the all-blocklength statement. \Cref{sec:separation} proves that the Choi state of the chosen channel is NPT and uses an exact symmetric-extension witness to exclude the antidegradability of the channel.

\paragraph{Use of AI-assisted tools.}
The authors formulated the research question, selected the candidate channels under investigation, and directed the development. During exploratory work using QudeLeap's under-development AI Scientist harness system together with several frontier LLM models, the system suggested replacing a physical antidegrading channel with an adjoint-preserving, generally nonpositive signed lift $\cJ$ satisfying $\cJ^{\ad}\circ\Lambda^c=\Lambda$, while carefully controlling the loss of positivity through complete positivity of the operator-valued defect. The authors reformulated this suggestion into precise mathematical statements. The authors take full responsibility for the accuracy and integrity of the work.

\section{Preliminaries}
\label{sec:main}

All Hilbert spaces are finite dimensional.  For systems $A$ and $B$, let
$\Lin(\cH_A,\cH_B)$ denote the space of linear operators from $\cH_A$ to
$\cH_B$, and set $\Lin(\cH_A)\coloneqq\Lin(\cH_A,\cH_A)$.  The Hilbert--Schmidt inner product on
$\Lin(\cH_A)$ is $\langle X_A,Y_A\rangle_{\HS}\coloneqq\Tr(X_A^{\ad}Y_A)$, and
$\Dens(\cH_A)$ denotes the set of density operators on $\cH_A$.  We use
$\idop_A$ to denote the identity operator on $\cH_A$. For \(n\ge1\), write \(A^n\coloneqq A_1\cdots A_n\) and $\cH_{A^n}\coloneqq\cH_A^{\otimes n}$. Let $\cN_{A\to B}:\Lin(\cH_A)\to\Lin(\cH_B)$ be a quantum channel. Fix an isometric extension $V^{\cN}_{A\to BE}\in\Lin(\cH_A,\cH_B\otimes\cH_E)$, so that
\begin{equation}
\cN_{A\to B}(X_A)
=\Tr_E\!\left[
V^{\cN}X_A(V^{\cN})^{\ad}
\right],
\qquad
\cN^c_{A\to E}(X_A)
=\Tr_B\!\left[
V^{\cN}X_A(V^{\cN})^{\ad}
\right],
\label{eq:channel-complement}
\end{equation}
where $\cN^c_{A\to E}$ is called the complementary channel of $\cN_{A\to B}$. For a state $\rho_C$, the von Neumann entropy is denoted by $S(C)_\rho\coloneqq-\Tr(\rho_C\log\rho_C)$. For a bipartite state $\rho_{CD}$, the quantum mutual information and
conditional entropy are
\[
I(C;D)_\rho
\coloneqq S(C)_\rho+S(D)_\rho-S(CD)_\rho,
\qquad
S(C|D)_\rho
\coloneqq S(CD)_\rho-S(D)_\rho.
\]

\paragraph{Quantum communication.}
The unassisted quantum capacity $Q(\cN)$ is the supremum of the asymptotic
rates at which quantum information can be transmitted reliably over
independent uses of $\cN$.  For
$\rho_A\in\Dens(\cH_A)$, let $\psi_{RA}$ be a purification of $\rho_A$ and set $\omega_{RBE}
\coloneqq(\idop_R\otimes V^{\cN}_{A\to BE})
\psi_{RA}(\idop_R\otimes V^{\cN}_{A\to BE})^{\ad}$.
The coherent information of $\rho_A$ through $\cN$ is defined by
\begin{equation}
\Ic(\rho_A,\cN)
\coloneqq -S(R|B)_\omega
=S(B)_\omega-S(E)_\omega.
\label{eq:coherent-info}
\end{equation}
The optimized coherent information of one channel use is
\begin{equation}
Q^{(1)}(\cN)
\coloneqq
\max_{\rho_A\in\Dens(\cH_A)}
\Ic(\rho_A,\cN).
\label{eq:quantum-one}
\end{equation}
The quantum coding theorem identifies the quantum capacity as the regularized coherent
information~\cite{Lloyd1997,Shor2002,Devetak2005}
\begin{equation}
Q(\cN)
=
\lim_{n\to\infty}
\frac{1}{n}Q^{(1)}(\cN^{\otimes n}).
\label{eq:quantum-capacity-regularization}
\end{equation}
The limit is necessary since the coherent information can be strictly superadditive.  This superadditivity can be strict, i.e., entangled inputs across several channel uses can outperform all product inputs~\cite{ShorSmolin1996,DiVincenzoShorSmolin1998,SmithSmolin2007}. More strongly, no universal finite blocklength suffices even to determine whether the quantum capacity is positive~\cite{CubittEtAl2015}.

\paragraph{Private classical communication.}
The unassisted private capacity $P(\cN)$ is the supremum of the asymptotic
rates at which classical messages can be decoded reliably by the receiver
while becoming asymptotically independent of the environment.  For a finite
input ensemble
$\{p(u),\rho_A^u\}_{u\in\mathcal U}$, define
\begin{equation}
\omega_{UBE}
\coloneqq
\sum_{u\in\mathcal U}
p(u)\ketbra{u}{u}_U
\otimes
V^{\cN}_{A\to BE}\rho_A^u
(V^{\cN}_{A\to BE})^{\ad}.
\label{eq:cq-output-state}
\end{equation}
The corresponding private information is the advantage of the receiver over
the environment.  Optimizing this advantage over finite input ensembles gives
\begin{equation}
P^{(1)}(\cN)
\coloneqq
\sup_{\{p(u),\rho_A^u\}}
\left[
I(U;B)_\omega-I(U;E)_\omega
\right].
\label{eq:private-one}
\end{equation}
The private coding theorem identifies the private capacity with the
regularized private information~\cite{Devetak2005}.
\begin{equation}
P(\cN)
=
\lim_{n\to\infty}
\frac{1}{n}P^{(1)}(\cN^{\otimes n})
=
\sup_{n\ge1}
\frac{1}{n}P^{(1)}(\cN^{\otimes n}).
\label{eq:private-capacity-regularization}
\end{equation}
Again, the existence of the limit follows from superadditivity and
finite dimensionality.  Private information can also remain strictly
superadditive over arbitrarily many channel uses~\cite{ElkoussStrelchuk2015}.

\section{Complete less-noisy orders and variance-dominating signed lifts}
\label{sec:general-framework}

Antidegradability establishes zero capacity through a physical simulation of Bob’s output from Eve’s complementary output. For channels outside the antidegradable class, however, requiring a complete reconstruction of the output state may be unnecessarily strong. This suggests seeking a weaker, information-theoretic comparison. Even after adjoining an arbitrary reference system, Eve’s output should be at least as distinguishable as Bob’s for every pair of input hypotheses, as measured by Umegaki relative entropy. The complete less-noisy order provides a natural formulation of this idea. Its tensorization property further suggests how a single-use comparison might extend to all blocklengths.

The less-noisy viewpoint originates in classical channel comparison and was brought into quantum capacity theory by Watanabe~\cite{Watanabe2012}.  The reference-complete relative-entropy formulation and its tensorization were developed systematically by Hirche, Rouz\'e, and
Stilck Fran\c{c}a~\cite{HircheRouzeFranca2022}. We recall only the forms needed here. The new ingredient is the complete variance-domination criterion in~\Cref{thm:criterion}, together with its equivalent CP-defect certificate.  It replaces physical post-processing by an adjoint-preserving, generally nonpositive observable lift that is unbiased and reference-stably variance contracting.

\subsection{Complete less-noisy order}

For states $\rho$ and $\sigma$, the Umegaki relative entropy is defined by
\begin{equation}
D(\rho\Vert\sigma)
\coloneqq
\begin{cases}
    \Tr[\rho(\log\rho-\log\sigma)],
    &\supp\rho\subseteq\supp\sigma,\\
    +\infty,&\text{otherwise}.
\end{cases}
\label{eq:relative-entropy}
\end{equation}
Let
$\cM_{A\to B}:\Lin(\cH_A)\to\Lin(\cH_B)$ and
$\cN_{A\to E}:\Lin(\cH_A)\to\Lin(\cH_E)$ be channels with the same input.
We write
\begin{equation}
\cM\REord\cN
\label{eq:crel-symbol}
\end{equation}
if, for every finite-dimensional reference system $R$ and all
$\rho_{RA},\sigma_{RA}\in\Dens(\cH_R\otimes\cH_A)$ satisfying
$\supp\rho_{RA}\subseteq\supp\sigma_{RA}$,
\begin{equation}
D(\cM_R(\rho)\Vert\cM_R(\sigma))
\ge
D(\cN_R(\rho)\Vert\cN_R(\sigma)),
\qquad
\cM_R\coloneqq\id_R\otimes\cM,\quad
\cN_R\coloneqq\id_R\otimes\cN.
\label{eq:crel-definition}
\end{equation}
In finite dimensions the support condition implies
$\rho_{RA}\le c\sigma_{RA}$ for some $c<\infty$.  Since $\cM$ and
$\cN$ are completely positive, their reference amplifications
$\cM_R$ and $\cN_R$ are positive.  Consequently,
\[
\cM_R(\rho)\le c\cM_R(\sigma),
\qquad
\cN_R(\rho)\le c\cN_R(\sigma),
\]
so both relative entropies in
Eq.~\eqref{eq:crel-definition} are finite.
Hirche, Rouz\'e, and Stilck Fran\c{c}a formulate the complete less-noisy
preorder through a reference-assisted mutual-information inequality.
For each fixed $R$, their characterization identifies the condition in
Eq.~\eqref{eq:crel-definition} precisely with the complete less-noisy
preorder~\cite[Proposition~2.3]{HircheRouzeFranca2022}.  The subscript
$\mathrm{cRE}$ only emphasizes its relative-entropy characterization. According to the hybrid tensorization lemma from~\cite[Lemma~3.1]{HircheRouzeFranca2022}, i.e.,
\begin{equation}
\cM_1\REord\cN_1,\quad \cM_2\REord\cN_2
\quad\Longrightarrow\quad
(\cM_1\otimes\cM_2)\REord(\cN_1\otimes\cN_2),
\label{eq:hrsf-two-pair}
\end{equation}
we have that
\begin{equation}
\cM\REord\cN
\quad\Longrightarrow\quad
\cM^{\otimes n}\REord\cN^{\otimes n}
\qquad(n\ge1).
\label{eq:tensorized-order}
\end{equation}

The operational implication below is standard in the theory of anti-less-noisy
channels~\cite{Watanabe2012,HircheRouzeFranca2022}.  We state it as an
attributed corollary and include the short argument to make the two capacity
conclusions and their orientations transparent.

\begin{corollary}[Standard anti-less-noisy capacity consequence]
\label{cor:anti-less-noisy-capacity}
Let $\cN_{A\to B}$ be a channel with complementary channel
$\cN^c_{A\to E}$.  If
\begin{equation}
    \cN^c\REord\cN,
    \label{eq:complement-less-noisy}
\end{equation}
then, for every $n\ge1$,
\begin{equation}
    P^{(1)}(\cN^{\otimes n})
    =Q^{(1)}(\cN^{\otimes n})=0,
    \qquad
    P(\cN)=Q(\cN)=0.
    \label{eq:general-all-block-zero}
\end{equation}
\end{corollary}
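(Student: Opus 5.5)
The plan is to tensorize the hypothesis with Eq.~\eqref{eq:tensorized-order}, which gives $(\cN^c)^{\otimes n}\REord\cN^{\otimes n}$ for every $n$. Since $(\cN^{\otimes n})^c=(\cN^c)^{\otimes n}$ when we use the product isometry $(V^{\cN})^{\otimes n}$, it is enough to prove the single-use statement $P^{(1)}(\cM)\le 0$ and $Q^{(1)}(\cM)\le0$ for an arbitrary channel $\cM$ with $\cM^c\REord\cM$, and then apply it to $\cM=\cN^{\otimes n}$. Both quantities are nonnegative: a pure input state gives $S(B)=S(E)$, and a one-element ensemble gives zero private information. So we obtain equality. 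The capacity statements then follow directly from Eqs.~\eqref{eq:quantum-capacity-regularization} and \eqref{eq:private-capacity-regularization}.

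\textbf{Private information.} Take an ensemble $\{p(u),\rho^u_A\}$ and encode it into a classical-quantum input $\rho_{UA}=\sum_u p(u)\ketbra{u}{u}\otimes\rho_A^u$ with reference $R=U$. Compare it with $\sigma_{UA}=\rho_U\otimes\rho_A$, where $\rho_A=\sum_u p(u)\rho_A^u$. The support condition holds because $\rho_A^u\le p(u)^{-1}\rho_A$ on each branch with $p(u)>0$, so $\rho_{UA}\le c\,\sigma_{UA}$. The standard identity $D(\omega_{UB}\Vert\omega_U\otimes\omega_B)=I(U;B)_\omega$ then turns Eq.~\eqref{eq:crel-definition}, with $\cN^c$ in the dominating role, into $I(U;E)_\omega\ge I(U;B)_\omega$. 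Hence every term in Eq.~\eqref{eq:private-one} is at most $0$.

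\textbf{Coherent information.} Here I would use the purification $\psi_{RA}$ of $\rho_A$ and set $\rho_{RA}=\psi_{RA}$, $\sigma_{RA}=\psi_R\otimes\rho_A$. The support condition holds since $\supp\psi_{RA}\subseteq\supp\psi_R\otimes\supp\rho_A$. The order gives $I(R;E)_\omega\ge I(R;B)_\omega$. Because $\omega_{RBE}$ is pure, $S(RB)=S(E)$ and $S(RE)=S(B)$, and $S(R)=S(A)$. This yields
\[
I(R;B)-I(R;E)=S(B)-S(E)+S(RE)-S(RB)=2\,\Ic(\rho_A,\cM).
\]
So $\Ic(\rho_A,\cM)\le0$ for every $\rho_A$, and the inequality survives the maximization in Eq.~\eqref{eq:quantum-one}.

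\textbf{Main obstacle.} The only delicate point is bookkeeping rather than analysis. One has to check the orientation, namely that the complement sits on the larger side of Eq.~\eqref{eq:crel-definition}. One also has to confirm that the complement of $\cN^{\otimes n}$ can be taken to be $(\cN^c)^{\otimes n}$; any other choice differs only by an isometry on $E$, which leaves entropies unchanged. Finally, the reference systems used above, $U$ and $R$, must be finite-dimensional, which holds because the ensembles are finite and the purification space can be taken with $\dim R=\dim A$. The private capacity conclusion also relies on the supremum form of Eq.~\eqref{eq:private-capacity-regularization}.
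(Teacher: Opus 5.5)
Your proposal is correct and follows the paper's argument. Both proofs tensorize the order, compare $\psi_{RA^n}$ with $\psi_R\otimes\rho_{A^n}$ and use purity to get $2\Ic\le0$, and bound the private information through relative entropy. The only difference is in the private part. You apply the order once, with the classical register $U$ as reference, to the pair $(\rho_{UA},\rho_U\otimes\rho_A)$. The paper instead applies it separately to each pair $(\rho^u,\bar\rho)$ with trivial reference and averages. The two give the same inequality, since $D(\omega_{UB}\Vert\omega_U\otimes\omega_B)=\sum_u p(u)\,D(\cN(\rho^u)\Vert\cN(\bar\rho))$.
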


\begin{proof}
The two-pair tensorization lemma gives
$(\cN^c)^{\otimes n}\REord\cN^{\otimes n}$ for every
$n$~\cite[Lemma~3.1]{HircheRouzeFranca2022}.
For an ensemble $\{p(u),\rho^u_{A^n}\}$, discard any zero-probability
terms and set $\bar\rho_{A^n}\coloneqq\sum_u p(u)\rho^u_{A^n}$.  Since
$\supp\rho^u\subseteq\supp\bar\rho$, the relative-entropy
characterization of Holevo information gives
\begin{align}
    I(U;E^n)-I(U;B^n)=\sum_u p(u)\Bigl[
    D\!\left((\cN^c)^{\otimes n}(\rho^u)
    \middle\Vert(\cN^c)^{\otimes n}(\bar\rho)\right)-
    D\!\left(\cN^{\otimes n}(\rho^u)
    \middle\Vert\cN^{\otimes n}(\bar\rho)\right)
    \Bigr]\ge0.
    \label{eq:private-from-order}
\end{align}
Thus every private-information difference is nonpositive; a singleton ensemble attains zero.

For the coherent information, let $\psi_{RA^n}$ purify
$\rho_{A^n}$.  The Schmidt decomposition gives
$\supp\psi_{RA^n}\subseteq
\supp(\psi_R\otimes\rho_{A^n})$, so the reference-complete order applies
to $\bigl(\psi_{RA^n},\psi_R\otimes\rho_{A^n}\bigr)$.  This gives
$I(R;E^n)\ge I(R;B^n)$.  The joint state on $RB^nE^n$ is pure, and hence
\begin{equation}
    2\Ic(\rho_{A^n},\cN^{\otimes n})
    =I(R;B^n)-I(R;E^n)\le0.
    \label{eq:coherent-from-order}
\end{equation}
A pure input attains coherent information zero.  Therefore both optimized
one-shot quantities vanish for every $n$, and their regularizations vanish
as well.  This standard capacity consequence was proved more generally
elsewhere~\cite[Proposition~3.2 and
Theorem~4.13]{HircheRouzeFranca2022}.
\end{proof}

\subsection{The BKM-to-relative-entropy bridge}

The complete less-noisy order compares relative entropies globally.  We
instead verify their Hessians locally at every reference amplification and
then use the standard affine-path integration argument. For a faithful state $\sigma$ and Hermitian $X$, set
\begin{align}
\Gamma_\sigma(Y)
&\coloneqq
\int_0^1\sigma^sY\sigma^{1-s}\,ds,
\label{eq:kubo-mori-map}\\
g_\sigma(X)
&\coloneqq
\langle X,\Gamma_\sigma^{-1}(X)\rangle_{\HS}=
\int_0^\infty
\Tr\!\left[
X(\sigma+s\idop)^{-1}
X(\sigma+s\idop)^{-1}
\right]ds.
\label{eq:dual-bkm-form}
\end{align}
The form $g_\sigma$ is the BKM metric in its relative-entropy Hessian
convention, a standard monotone quantum
metric~\cite{Petz1994,Petz1996,LesniewskiRuskai1999}. Let $\cM(\idop_A)\succ0$ and $\cN(\idop_A)\succ0$.  We write
$\cM\BKMord\cN$ if, for every finite reference $R$, every faithful
$\rho_{RA}$, and every traceless Hermitian $X_{RA}$,
\begin{equation}
g_{\cM_R(\rho)}(\cM_R(X))
\ge
g_{\cN_R(\rho)}(\cN_R(X)).
\label{eq:cbkm-definition}
\end{equation}
In words, $\cM$ completely dominates $\cN$ at the level of the
relative-entropy Hessian. For faithful $\rho,\sigma$, set $X\coloneqq\rho-\sigma$ and
$\rho_t\coloneqq(1-t)\sigma+t\rho$.  For
$f(t)\coloneqq D(\rho_t\Vert\sigma)$, one has
$f(0)=f'(0)=0$ and $f''(t)=g_{\rho_t}(X)$.  Taylor's theorem with integral
remainder therefore gives (see, e.g.,~\cite[proof of Lemma~2.2]{GaoRouze2022} and~\cite[Eq.~(15)]{BelzigEtAl2025})
\begin{equation}
D(\rho\Vert\sigma)=\int_0^1(1-t)g_{\rho_t}(X)\,dt.
\label{eq:relative-entropy-bkm}
\end{equation}
The next statement is the lower-bound half of the BKM-to-relative-entropy argument in~\cite[Lemma~4.1]{BelzigEtAl2025}, supplemented by an explicit boundary regularization.

\begin{lemma}[BKM-to-relative-entropy bridge]
\label{lem:complete-bkm-to-cre}
Let $\cM_{A\to B}$ and $\cN_{A\to E}$ be finite-dimensional quantum
channels such that
$\cM(\idop_A)\succ0$ and $\cN(\idop_A)\succ0$.  Then
\begin{equation}
    \cM\BKMord\cN \quad\Longrightarrow\quad
    \cM\REord\cN.
    \label{eq:local-global-conclusion}
\end{equation}
\end{lemma}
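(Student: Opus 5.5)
Fix a reference $R$ and states $\rho_{RA},\sigma_{RA}$ with $\supp\rho_{RA}\subseteq\supp\sigma_{RA}$. The aim is to reach the relative-entropy inequality \eqref{eq:crel-definition} by integrating the Hessian comparison along an affine path, and to cover non-faithful endpoints by a regularization.

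\textbf{Faithful case.} Suppose first that $\sigma_{RA}$ and $\rho_{RA}$ are both faithful. Set $X\coloneqq\rho-\sigma$, which is traceless and Hermitian, and $\rho_t\coloneqq(1-t)\sigma+t\rho$. Each $\rho_t$ is faithful, and linearity gives $\cM_R(\rho_t)=(1-t)\cM_R(\sigma)+t\cM_R(\rho)$. Every output state is faithful: from $\rho_t\ge\lambda\idop_{RA}$ with $\lambda>0$ we get $\cM_R(\rho_t)\ge\lambda\,\idop_R\otimes\cM(\idop_A)\succ0$, and likewise for $\cN$. This is where the hypotheses $\cM(\idop_A)\succ0$ and $\cN(\idop_A)\succ0$ are used. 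Apply \eqref{eq:relative-entropy-bkm} to the output pairs $(\cM_R(\rho),\cM_R(\sigma))$ and $(\cN_R(\rho),\cN_R(\sigma))$, whose affine paths are the images of $\rho_t$. This gives
\[
D(\cM_R(\rho)\Vert\cM_R(\sigma))=\int_0^1(1-t)\,g_{\cM_R(\rho_t)}(\cM_R(X))\,dt ,
\]
and the same formula for $\cN$. The hypothesis \eqref{eq:cbkm-definition}, applied at the faithful state $\rho_t$, compares the integrands pointwise. Integrating yields the inequality.

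\textbf{Regularization.} In the general case let $\tau\coloneqq\idop_{RA}/d$ and set $\rho^\varepsilon\coloneqq(1-\varepsilon)\rho+\varepsilon\tau$ and $\sigma^\varepsilon\coloneqq(1-\varepsilon)\sigma+\varepsilon\tau$. Both are faithful, so the faithful case gives
\[
D(\cM_R(\rho^\varepsilon)\Vert\cM_R(\sigma^\varepsilon))\ge D(\cN_R(\rho^\varepsilon)\Vert\cN_R(\sigma^\varepsilon)).
\]
It remains to let $\varepsilon\to0$ on both sides. This limit is the main obstacle, because relative entropy is only lower semicontinuous in general. The right-hand side is harmless: lower semicontinuity gives $\liminf_{\varepsilon\to0}D(\cN_R(\rho^\varepsilon)\Vert\cN_R(\sigma^\varepsilon))\ge D(\cN_R(\rho)\Vert\cN_R(\sigma))$. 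For the left-hand side we need an upper bound, and I would use joint convexity:
\[
D(\cM_R(\rho^\varepsilon)\Vert\cM_R(\sigma^\varepsilon))\le(1-\varepsilon)D(\cM_R(\rho)\Vert\cM_R(\sigma))+\varepsilon\cdot0 .
\]
The left-hand quantity is finite because $\rho\le c\sigma$, as noted after \eqref{eq:crel-definition}. Hence $\limsup_{\varepsilon\to0}D(\cM_R(\rho^\varepsilon)\Vert\cM_R(\sigma^\varepsilon))\le D(\cM_R(\rho)\Vert\cM_R(\sigma))$. Chaining the three inequalities gives $D(\cM_R(\rho)\Vert\cM_R(\sigma))\ge D(\cN_R(\rho)\Vert\cN_R(\sigma))$. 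Since $R$ and the pair $(\rho,\sigma)$ were arbitrary, this is $\cM\REord\cN$.

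Two routine checks remain. First, \eqref{eq:relative-entropy-bkm} requires $f$ to be twice differentiable with $f''(t)=g_{\rho_t}(X)$; this holds for faithful endpoints by the cited Taylor argument. Second, the support condition is automatic for the regularized pair, since both states are faithful.
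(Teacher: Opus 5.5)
Your proof is correct and follows essentially the same route as the paper. Like the paper, you regularize both states toward the maximally mixed state, integrate the pointwise BKM comparison along the affine path of faithful outputs via the Taylor-remainder formula, and pass to the limit using joint convexity for an upper bound and lower semicontinuity for a lower bound. The only differences are minor: the paper applies both limit tools to each channel to show the regularized relative entropies actually converge, whereas you use joint convexity only on the $\cM$ side and lower semicontinuity only on the $\cN$ side, which is all the inequality chain needs.
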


\begin{proof}
Fix a finite reference $R$ and states $\rho_{RA},\sigma_{RA}$ satisfying
$\supp\rho\subseteq\supp\sigma$.  Let
$\tau_{RA}=\idop_{RA}/(d_Rd_A)$ and, for $0<\varepsilon<1$, put
\[
\rho_\varepsilon=(1-\varepsilon)\rho+\varepsilon\tau,
\qquad
\sigma_\varepsilon=(1-\varepsilon)\sigma+\varepsilon\tau.
\]
Both states are faithful.  Writing
$\rho_{\varepsilon,t}=(1-t)\sigma_\varepsilon+t\rho_\varepsilon$,
one has $\rho_{\varepsilon,t}\succeq\varepsilon\tau$ and hence, for
$\Phi\in\{\cM,\cN\}$,
\[
\Phi_R(\rho_{\varepsilon,t})
\succeq\frac{\varepsilon}{d_Rd_A}
\idop_R\otimes\Phi(\idop_A)\succ0.
\]
Thus every channel output along the segment is faithful, so
Eq.~\eqref{eq:relative-entropy-bkm} and the complete BKM
order give
\begin{align}
    D\!\left(\cM_R(\rho_\varepsilon)
    \middle\Vert\cM_R(\sigma_\varepsilon)\right)
    \ge
    D\!\left(\cN_R(\rho_\varepsilon)
    \middle\Vert\cN_R(\sigma_\varepsilon)\right).
    \label{eq:bkm-regularized-re}
\end{align}
For either channel $\Phi\in\{\cM,\cN\}$, complete positivity preserves
support inclusion: $\rho\le c\sigma$ for some finite $c$, hence
$\Phi_R(\rho)\le c\Phi_R(\sigma)$.  The limiting relative entropy is
therefore finite.  Lower semicontinuity gives
\[
D\!\left(\Phi_R(\rho)\middle\Vert\Phi_R(\sigma)\right)
\le
\liminf_{\varepsilon\downarrow0}
D\!\left(\Phi_R(\rho_\varepsilon)
\middle\Vert\Phi_R(\sigma_\varepsilon)\right),
\]
whereas joint convexity, using the same state $\Phi_R(\tau)$ in both
regularizations, gives
\[
D\!\left(\Phi_R(\rho_\varepsilon)
\middle\Vert\Phi_R(\sigma_\varepsilon)\right)
\le
(1-\varepsilon)
D\!\left(\Phi_R(\rho)\middle\Vert\Phi_R(\sigma)\right).
\]
Thus the regularized relative entropies converge to their support-compatible
limits.  Letting $\varepsilon\downarrow0$ in
Eq.~\eqref{eq:bkm-regularized-re} proves $\cM\REord\cN$.
\end{proof}

For a concrete channel pair, it therefore remains only to prove the local
order $\cM\BKMord\cN$.  The next subsection gives an algebraic certificate
for that purpose.

\subsection{Signed lifts and complete variance domination}

The criterion below is motivated by the variance contraction obeyed by physical post-processing, while allowing the reconstruction map to be nonpositive.  For a state $\theta$ and a Hermitian observable $Y$, write
\begin{equation}
\operatorname{Var}_{\theta}(Y)
\coloneqq
\Tr(\theta Y^2)-\Tr(\theta Y)^2.
\label{eq:variance-definition}
\end{equation}
Suppose that $\cN=\cD\circ\cM$ for a quantum channel
$\cD:\Lin(\cH_E)\to\Lin(\cH_B)$ and put $\cJ=\cD^{\ad}$.  Then
$\cJ^{\ad}\circ\cM=\cN$, so measuring $\cJ(Y_B)$ on the output of $\cM$ reconstructs the expectation of $Y_B$ at the output of $\cN$. Because $\cJ=\cD^{\ad}$ is unital and completely positive, the Kadison--Schwarz inequality gives~\cite{Kadison1952,Choi1974Schwarz}
\begin{equation}
    \cJ(Y_B^2)\succeq \cJ(Y_B)^2 \implies \operatorname{Var}_{\cM(\theta)}\!\left(\cJ_R(Y)\right) \leq \operatorname{Var}_{\cN(\theta)}(Y).
\end{equation}
Such shadow information, or expectation value of the observable, reconstruction with variance contraction holds directly for antidegradable channels, by taking $\cM\coloneqq\cN^c$. By relaxing the assumption that $J$ is necessarily positive, we introduce the complete variance-dominating signed lift as follows.

\begin{definition}[Complete variance-dominating signed lift]
\label{def:complete-variance-lift}
Let $\cN_{A\to B}$ and $\cM_{A\to E}$ be quantum channels.  A complex-linear
adjoint-preserving map
$\cJ:\Lin(\cH_B)\to\Lin(\cH_E)$ is a
\emph{complete variance-dominating signed lift} of $\cN$ through $\cM$ if
\begin{equation}
    \cJ^{\ad}\circ\cM=\cN
    \label{eq:criterion-lift}
\end{equation}
and, for every finite reference $R$, every state $\rho_{RA}$, and every
Hermitian $Y_{RB}$,
\begin{equation}
    \operatorname{Var}_{\cM_R(\rho)}
    \!\left(\cJ_R(Y)\right)
    \le
    \operatorname{Var}_{\cN_R(\rho)}(Y),
    \qquad
    \cJ_R\coloneqq\id_R\otimes\cJ.
    \label{eq:complete-variance-domination}
\end{equation}
The adjective \emph{signed} emphasizes that $\cJ$ is not assumed positive. Thus $\cJ(Y)$ is a Hermitian observable whenever $Y$ is Hermitian.
\end{definition}

Nonpositive linear post-processings have also appeared in quantum statistical comparison through the notion of a quantum statistical morphism~\cite[Definition 5]{Buscemi2016}.  Although it need not be positive, it must reproduce the outcome statistics of every target POVM by a suitable POVM on the source system.  In the present notation, we do not assume that the Schr\"odinger-picture reconstruction
$\cJ^{\ad}$ is a statistical morphism, and for a receiver POVM $\{Q_y\}_y$, the generally nonpositive operators $\{\cJ(Q_y)\}_y$ need not form an environment POVM.  Instead, the signed lift provides a fixed observable correspondence $Y_B\longmapsto\cJ(Y_B)$, that reconstructs the expectation of every receiver observable and satisfies a variance comparison stable under arbitrary quantum references.

We next give a finite algebraic certificate for this physically motivated
statistical condition.  For
any system $F$, write $\mathsf K_F\coloneqq\Lin(\cH_F)$ when the operator
space is regarded as a Hilbert space with the Hilbert--Schmidt inner product.
Thus elements of $\Lin(\mathsf K_F)$ are superoperators.  For
$\theta_F\in\Lin(\cH_F)$, let
\begin{equation}
L_\theta(Y)\coloneqq\theta Y,
\qquad
R_\theta(Y)\coloneqq Y\theta,
\qquad Y\in\mathsf K_F.
\label{eq:left-right}
\end{equation}
Superoperator inequalities below refer to the L\"owner order induced by the Hilbert--Schmidt inner product.  Given $\cJ$, define the \emph{outer defect} of $\cN$ through $\cM$ by
\begin{equation}
\begin{aligned}
    \Delta_{\cJ}:\Lin(\cH_A)&\longrightarrow\Lin(\mathsf K_B),\\
    \bigl[\Delta_{\cJ}(\tau_A)\bigr](Y_B)
    &\coloneqq
    \cN(\tau_A)Y_B
    -\cJ^{\ad}\!\left(\cM(\tau_A)\cJ(Y_B)\right),
\end{aligned}
\label{eq:criterion-defect}
\end{equation}
or equivalently
\begin{equation}
\Delta_{\cJ}(\tau_A)
=
L_{\cN(\tau_A)}
-\cJ^{\ad}\circ L_{\cM(\tau_A)}\circ\cJ.
\label{eq:criterion-defect-operator}
\end{equation}

\begin{proposition}[Variance and the CP defect]
\label{prop:variance-defect-equivalence}
Given quantum channels $\cN_{A\to B}$ and $\cM_{A\to E}$, assume that $\cJ:\Lin(\cH_B)\to \Lin(\cH_E)$ is complex-linear, adjoint preserving, and satisfies $\cJ^\ad\circ \cM = \cN$.  The following are equivalent.
\begin{enumerate}[label=\textup{(\roman*)}]
    \item $\cJ$ is a complete variance-dominating signed lift of $\cN$ through $\cM$.
    \item For every finite reference $R$, every $\tau_{RA}\succeq0$, and every
    not necessarily Hermitian $Z_{RB}$,
    \begin{equation}
        \begin{aligned}
            \Tr\!\left[Z^{\ad}\cN_R(\tau)Z\right]\ge
            \Tr\!\left[
            \cJ_R(Z)^{\ad}\cM_R(\tau)\cJ_R(Z)
            \right].
        \end{aligned}
        \label{eq:complete-second-moment}
    \end{equation}
    \item The outer map $\Delta_{\cJ}$ in
    Eq.~\eqref{eq:criterion-defect} is completely positive.
\end{enumerate}
\end{proposition}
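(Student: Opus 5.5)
The plan is to prove the cycle (i)$\Leftrightarrow$(ii)$\Leftrightarrow$(iii), using throughout the lift identity $\cJ^{\ad}\circ\cM=\cN$ and its reference amplification $\cJ_R^{\ad}\circ\cM_R=\cN_R$. The basic observation is that the first moments always agree. For Hermitian $Y_{RB}$ and a state $\rho_{RA}$,
\[
\Tr[\cM_R(\rho)\cJ_R(Y)]=\Tr[\cJ_R^{\ad}(\cM_R(\rho))Y]=\Tr[\cN_R(\rho)Y].
\]
So the mean terms cancel, and \eqref{eq:complete-variance-domination} is equivalent to the second-moment inequality
\[
\Tr[\cN_R(\rho)Y^2]\ge\Tr[\cM_R(\rho)\cJ_R(Y)^2]
\]
for Hermitian $Y$. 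Since $\cJ_R$ is adjoint preserving, this is exactly \eqref{eq:complete-second-moment} restricted to Hermitian $Z$ and normalized $\tau$. By homogeneity in $\tau$, and since $\tau=0$ is trivial, normalization costs nothing.

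\textbf{(i)$\Rightarrow$(ii).} The task is to pass from Hermitian $Y$ to arbitrary $Z$. I would enlarge the reference by a qubit $C$ and set $R'=RC$. Given $Z_{RB}$, define the Hermitian operator
\[
Y=\ketbra{0}{1}_C\otimes Z+\ketbra{1}{0}_C\otimes Z^{\ad},
\]
and use the state $\rho'=\ketbra{1}{1}_C\otimes\rho_{RA}$. Then $Y^2$ has $\ketbra{1}{1}\otimes Z^{\ad}Z$ in the $|1\rangle\langle1|$ block. Because $\cJ_{R'}$ acts trivially on $C$, $\cJ_{R'}(Y)^2$ carries $\cJ_R(Z)^{\ad}\cJ_R(Z)$ in the same block. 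Evaluating on $\rho'$ gives
\[
\Tr[\cN_R(\rho)ZZ^{\ad}]\ \text{vs.}\ \Tr[\cM_R(\rho)\cJ_R(Z)\cJ_R(Z)^{\ad}]
\]
(or the adjoint ordering, depending on which block is selected). Replacing $Z$ by $Z^{\ad}$ yields \eqref{eq:complete-second-moment}. The bookkeeping of which block survives is the only care needed here.

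\textbf{(ii)$\Leftrightarrow$(iii).} For $\tau\succeq0$ and $Z\in\mathsf K_B$,
\[
\langle Z,\Delta_\cJ(\tau)(Z)\rangle_{\HS}=\Tr[Z^{\ad}\cN(\tau)Z]-\Tr[\cJ(Z)^{\ad}\cM(\tau)\cJ(Z)].
\]
This uses $\langle Z,\cJ^{\ad}(W)\rangle=\langle\cJ(Z),W\rangle$, which holds because $\cJ^{\ad}$ is the Hilbert--Schmidt adjoint. So (ii) with trivial $R$ says exactly that $\Delta_\cJ$ is a positive map into $\Lin(\mathsf K_B)$.

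The main obstacle is the matching with a reference. We need to check that $\id_R\otimes\Delta_\cJ$, acting on $\tau_{RA}$ and landing in $\Lin(\cH_R)\otimes\Lin(\mathsf K_B)$, corresponds to the defect computed with $\cN_R,\cM_R,\cJ_R$ on $\mathsf K_{RB}$. The two are not literally the same space, since $\Lin(\cH_R)\otimes\Lin(\mathsf K_B)$ differs from $\Lin(\mathsf K_R\otimes\mathsf K_B)$. Instead, the reference-amplified defect is $L_{\cN_R(\tau)}-\cJ_R^{\ad}L_{\cM_R(\tau)}\cJ_R$, which acts as left multiplication on the $R$ factor. Left multiplication $X\mapsto L_X$ is a $*$-homomorphism $\Lin(\cH_R)\to\Lin(\mathsf K_R)$, and $\mathsf K_R\cong\cH_R\otimes\cH_R$ with $L_X=X\otimes\idop$. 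Hence the amplified defect equals $(\pi\otimes\id)\big((\id_R\otimes\Delta_\cJ)(\tau)\big)$ with $\pi(X)=X\otimes\idop_{R}$ after reordering.

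This gives both directions. Positivity of $(\id_R\otimes\Delta_\cJ)(\tau)$ implies positivity of the amplified defect, so (iii)$\Rightarrow$(ii). Conversely, restricting $Z$ to the form $\sum_j (|\,\cdot\,\rangle\langle e_j|)\otimes Z_j$, i.e.\ vectors $v\otimes|e\rangle$ in $\cH_R\otimes\cH_R$, recovers positivity of $(\id_R\otimes\Delta_\cJ)(\tau)$ on all vectors of $\cH_R\otimes\mathsf K_B$, so (ii)$\Rightarrow$(iii).

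\textbf{(ii)$\Rightarrow$(i)} is immediate by specializing to Hermitian $Z$ and normalized $\tau$, using the first-moment identity.
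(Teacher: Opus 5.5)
Your proposal is correct and follows essentially the same route as the paper. It uses the qubit Hermitian dilation $\ketbra{0}{1}\otimes Z+\ketbra{1}{0}\otimes Z^{\ad}$ for (i)$\Rightarrow$(ii), direct specialization with matched first moments for (ii)$\Rightarrow$(i), and a block decomposition over $R$ for (ii)$\Leftrightarrow$(iii); selecting the $\ketbra{1}{1}$ block gives the $Z^{\ad}Z$ ordering, which your substitution $Z\mapsto Z^{\ad}$ correctly repairs, and your identification of the amplified defect with $(\id_R\otimes\Delta_{\cJ})(\tau)\otimes\idop$ via $L_X=X\otimes\idop$ is the coordinate-free form of the paper's identity $\sum_t\langle z_t,(\id_R\otimes\Delta_{\cJ})(\tau)z_t\rangle$.
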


\begin{proof}
We first prove the equivalence of \textup{(ii)} and \textup{(iii)}.  Choose a
basis of $R$ and write
\[
\tau=\sum_{r,s}|r\rangle\!\langle s|\otimes\tau_{rs},
\qquad
Z=\sum_{r,t}|r\rangle\!\langle t|\otimes Z_{rt}.
\]
For each column $t$, put
\[
|z_t\rangle
\coloneqq
\sum_r|r\rangle_R\otimes|Z_{rt}\rangle_{\mathsf K_B}
\in\cH_R\otimes\mathsf K_B.
\]
We can calculate that
\begin{align}
    \Tr\!\left[Z^{\ad}\cN_R(\tau)Z\right]
    -\Tr\!\left[
    \cJ_R(Z)^{\ad}\cM_R(\tau)\cJ_R(Z)
    \right]=\sum_{r,s,t}
    \left\langle
    Z_{rt},
    \bigl[\Delta_{\cJ}(\tau_{rs})\bigr](Z_{st})
    \right\rangle
    =\sum_t
    \left\langle
    z_t,
    (\id_R\otimes\Delta_{\cJ})(\tau)z_t
    \right\rangle.
    \label{eq:defect-amplification-identity}
\end{align}
Complete positivity of $\Delta_{\cJ}$ therefore implies
Eq.~\eqref{eq:complete-second-moment}.  Conversely, fix
$\tau_{RA}\succeq0$ and an arbitrary vector
$|z\rangle=\sum_r |r\rangle\otimes|Z_r\rangle\in \mathcal H_R\otimes\mathsf K_B$ and choose a basis vector $|t_0\rangle_R$. Define
\[
    Z_{RB}
    \coloneqq
    \sum_r |r\rangle\!\langle t_0|\otimes Z_r.
\]
The blocks of $Z_{RB}$ are $Z_{rt}=\delta_{t,t_0}Z_r$. Consequently, the vector associated with its $t$th block column is
\[
    |z_t\rangle=
    \sum_r |r\rangle\otimes|Z_{rt}\rangle
    =\delta_{t,t_0}
    \sum_r |r\rangle\otimes|Z_r\rangle
    =\delta_{t,t_0}|z\rangle.
\]
Substituting this into Eq.~\eqref{eq:defect-amplification-identity} gives
\begin{align}
    \Tr\!\left[Z^{\ad}\cN_R(\tau)Z\right]
    -\Tr\!\left[\cJ_R(Z)^{\ad}\cM_R(\tau)\cJ_R(Z)
    \right] =
    \sum_t
    \left\langle z_t,(\id_R\otimes\Delta_{\cJ})(\tau)z_t
    \right\rangle =
    \left\langle z,(\id_R\otimes\Delta_{\cJ})(\tau)z
    \right\rangle.
\end{align}
Therefore $\left\langle z,(\id_R\otimes\Delta_{\cJ})(\tau)z\right\rangle\ge 0$. Since $z$ was arbitrary, $(\id_R\otimes\Delta_{\cJ})(\tau)\succeq0$. Since the finite-dimensional reference system $R$ and $\tau_{RA}\succeq0$ were also arbitrary, $\Delta_{\cJ}$ is completely positive.

Condition \textup{(ii)} implies \textup{(i)} directly.  Indeed, for a state
$\rho_{RA}$ and a Hermitian $Y_{RB}$, adjoint preservation makes
$\cJ_R(Y)$ Hermitian, while Eq.~\eqref{eq:criterion-lift} gives
\begin{equation}
    \Tr\!\left[\cN_R(\rho)Y\right]
    =
    \Tr\!\left[\cM_R(\rho)\cJ_R(Y)\right].
    \label{eq:complete-equal-means}
\end{equation}
Applying Eq.~\eqref{eq:complete-second-moment} with $Z=Y$ and subtracting
the common squared mean proves
Eq.~\eqref{eq:complete-variance-domination}.

It remains to prove that \textup{(i)} implies \textup{(ii)}.  Let
$\tau_{RA}\succeq0$ be nonzero, put $t=\Tr\tau$, and let $Q$ be a qubit.
For arbitrary $Z_{RB}$, define
\begin{align}
    \widehat\rho_{QRA}
    &\coloneqq
    |0\rangle\!\langle0|_Q\otimes\frac{\tau_{RA}}{t},
    \notag\\
    \widehat Y_{QRB}
    &\coloneqq
    |0\rangle\!\langle1|_Q\otimes Z
    +|1\rangle\!\langle0|_Q\otimes Z^{\ad}.
    \label{eq:hermitian-dilation}
\end{align}
The observable $\widehat Y$ is Hermitian.  Both its receiver mean and the
mean of
$(\id_Q\otimes\cJ_R)(\widehat Y)$ vanish because the two output states are
supported on the $|0\rangle\!\langle0|_Q$ block.  Moreover,
\begin{align}
    \operatorname{Var}_{\cN_{QR}(\widehat\rho)}(\widehat Y)
    &=
    \frac1t\Tr\!\left[Z^{\ad}\cN_R(\tau)Z\right],
    \notag\\
    \operatorname{Var}_{\cM_{QR}(\widehat\rho)}
    \!\left((\id_Q\otimes\cJ_R)(\widehat Y)\right)
    &=
    \frac1t\Tr\!\left[
    \cJ_R(Z)^{\ad}\cM_R(\tau)\cJ_R(Z)
    \right].
    \label{eq:dilation-second-moments}
\end{align}
Complete variance domination on the enlarged reference $Q\otimes R$ gives
Eq.~\eqref{eq:complete-second-moment}.  The case $\tau=0$ is immediate.
\end{proof}

The quantification over arbitrary references in
Definition~\ref{def:complete-variance-lift} is essential: a scalar variance
comparison with no reference does not by itself control the complex
cross-terms in Eq.~\eqref{eq:complete-second-moment}.  Proposition
\ref{prop:variance-defect-equivalence} says that this reference-complete
statistical condition has an exact finite-dimensional certificate.  There
are two levels of action:
\[
\tau_A\xmapsto{\ \Delta_{\cJ}\ }
\Delta_{\cJ}(\tau_A)\in\Lin(\mathsf K_B),
\qquad
Y_B\xmapsto{\ \Delta_{\cJ}(\tau_A)\ }
\Delta_{\cJ}(\tau_A)(Y_B)\in\mathsf K_B.
\]
Complete positivity concerns the first, outer map.  By the
finite-dimensional Choi theorem~\cite{Choi1975}, it is equivalent to an
outer Kraus representation
\begin{equation}
\Delta_{\cJ}(\tau_A)
=
\sum_\mu G_\mu\tau_A G_\mu^{\ad},
\qquad
G_\mu\in\Lin(\cH_A,\mathsf K_B),
\label{eq:defect-outer-kraus}
\end{equation}
which is the form verified for the qutrit channel in
\Cref{sec:certificate}.

\begin{theorem}[Complete variance domination implies complete BKM domination]
\label{thm:criterion}
Let $\cN_{A\to B}$ and $\cM_{A\to E}$ be finite-dimensional quantum channels satisfying
\begin{equation}
    \cN(\idop_A)\succ0,
    \qquad
    \cM(\idop_A)\succ0.
    \label{eq:criterion-faithful-identity}
\end{equation}
If there exists a complete variance-dominating signed lift of $\cN$ through $\cM$, then
\begin{equation}
    \cM\BKMord\cN.
    \label{eq:criterion-pair-order}
\end{equation}
\end{theorem}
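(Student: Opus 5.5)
The plan is to use the variational (Legendre-dual) form of the BKM metric, so that the signed lift $\cJ$ only has to be pushed through one quadratic form. Fix a finite reference $R$, a faithful $\rho_{RA}$ and a traceless Hermitian $X_{RA}$. Put $\sigma_B\coloneqq\cN_R(\rho)$ and $\sigma_E\coloneqq\cM_R(\rho)$. Both are faithful, since $\rho\succeq\lambda_{\min}(\rho)\idop$ and by Eq.~\eqref{eq:criterion-faithful-identity}. Because $\Gamma_\sigma$ is a positive invertible superoperator, one has
\[
g_\sigma(X)=\sup_{Y=Y^\ad}\bigl[\,2\langle Y,X\rangle_{\HS}-\langle Y,\Gamma_\sigma(Y)\rangle_{\HS}\bigr],
\]
with the supremum attained at $Y=\Gamma_\sigma^{-1}(X)$. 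Given any Hermitian $Y_{RB}$, I will use $W\coloneqq\cJ_R(Y)$ as a trial point on the $E$ side. By Eq.~\eqref{eq:criterion-lift} and adjoint preservation, the linear terms agree:
\[
\langle W,\cM_R(X)\rangle=\langle Y,\cJ_R^\ad\cM_R(X)\rangle=\langle Y,\cN_R(X)\rangle .
\]
So it suffices to show the quadratic comparison
\begin{equation}
\langle \cJ_R(Y),\Gamma_{\sigma_E}(\cJ_R(Y))\rangle\le\langle Y,\Gamma_{\sigma_B}(Y)\rangle.
\end{equation}
Granting it, $g_{\sigma_E}(\cM_R X)\ge 2\langle Y,\cN_R X\rangle-\langle Y,\Gamma_{\sigma_B}Y\rangle$ for every $Y$, and taking the supremum over $Y$ gives Eq.~\eqref{eq:cbkm-definition}.

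To prove the quadratic comparison, I will write it in terms of the left and right multiplications of Eq.~\eqref{eq:left-right}. On $\mathsf K$ one has $\Gamma_\sigma=\int_0^1 L_\sigma^sR_\sigma^{1-s}\,ds=\mathfrak m(L_\sigma,R_\sigma)$. Here $\mathfrak m$ is the logarithmic mean, i.e.\ the Kubo--Ando operator mean whose representing function $f(x)=(x-1)/\log x$ is operator monotone. Since $L_\sigma$ and $R_\sigma$ commute, the Kubo--Ando mean reduces to this joint functional calculus. Proposition~\ref{prop:variance-defect-equivalence}(ii), applied with $\tau=\rho$ and arbitrary $Z$ on the enlarged reference, says precisely that
\[
\cJ_R^\ad L_{\sigma_E}\cJ_R\preceq L_{\sigma_B}
\]
in the Hilbert--Schmidt Löwner order. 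Applying (ii) to $Z^\ad$ and using $\cJ_R(Z^\ad)=\cJ_R(Z)^\ad$ gives $\Tr[Z\,\sigma_B Z^\ad]\ge\Tr[\cJ_R(Z)\,\sigma_E\,\cJ_R(Z)^\ad]$, which is the analogous bound $\cJ_R^\ad R_{\sigma_E}\cJ_R\preceq R_{\sigma_B}$.

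Two standard properties of Kubo--Ando means then finish the argument. The transformer inequality $T^\ad\mathfrak m(P,Q)T\preceq\mathfrak m(T^\ad PT,T^\ad QT)$ holds for arbitrary, not necessarily invertible, $T$; joint monotonicity holds as well. Together they give
\[
\cJ_R^\ad\Gamma_{\sigma_E}\cJ_R=\cJ_R^\ad\,\mathfrak m(L_{\sigma_E},R_{\sigma_E})\,\cJ_R\preceq\mathfrak m(\cJ_R^\ad L_{\sigma_E}\cJ_R,\cJ_R^\ad R_{\sigma_E}\cJ_R)\preceq\mathfrak m(L_{\sigma_B},R_{\sigma_B})=\Gamma_{\sigma_B}.
\]
Evaluating this at $Y$ yields the quadratic comparison.

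I expect the main obstacle to be this middle step, specifically applying the transformer inequality when $\cJ_R^\ad L_{\sigma_E}\cJ_R$ is only positive semidefinite, since $\cJ$ need not be injective. Here the two arguments of the mean do not commute, so the full Kubo--Ando machinery is genuinely needed, not just functional calculus. I would handle this by replacing $P,Q$ with $P+\varepsilon\idop,Q+\varepsilon\idop$, applying the transformer inequality to the now positive definite arguments, and letting $\varepsilon\downarrow0$ using the downward continuity of Kubo--Ando means. Alternatively, I would cite the transformer inequality directly in its general form for positive semidefinite arguments.
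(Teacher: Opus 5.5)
Your proof is correct. Up to the left and right weight bounds $\cJ_R^{\ad}L_{\sigma_E}\cJ_R\preceq L_{\sigma_B}$ and $\cJ_R^{\ad}R_{\sigma_E}\cJ_R\preceq R_{\sigma_B}$ it coincides with the paper's argument: both get the left bound from Proposition~\ref{prop:variance-defect-equivalence}(ii) with $\tau=\rho$, and the right bound from adjoint preservation.

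From there the routes differ.
\begin{itemize}
\item You prove the primal comparison $\cJ_R^{\ad}\Gamma_{\sigma_E}\cJ_R\preceq\Gamma_{\sigma_B}$. You treat $\Gamma_\theta$ as the Kubo--Ando logarithmic mean of $L_\theta$ and $R_\theta$, apply the transformer inequality and joint monotonicity, and then take a single Legendre step.
\item The paper (\Cref{lem:left-weight-to-bkm}) works on the inverse side. It writes $\Gamma_\theta^{-1}=\int_0^1\mathcal A_{\theta,t}^{-1}\,dt$ with $\mathcal A_{\theta,t}=(1-t)L_\theta+tR_\theta$. The bound $\cJ^{\ad}\mathcal A_{\sigma_E,t}\cJ\preceq\mathcal A_{\sigma_B,t}$ is then just a convex combination of the two weight bounds. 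It transfers each form $\langle Z,\mathcal A_{t}^{-1}Z\rangle$ through the same variational restriction $W=\cJ(Y)$ you use, and integrates over $t$.
\end{itemize}
The two intermediate statements are equivalent, since $T^{\ad}AT\preceq B$ if and only if $TB^{-1}T^{\ad}\preceq A^{-1}$ for positive invertible $A,B$.

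The paper's route is more elementary. It never forms a mean of the noncommuting, possibly singular operators $\cJ_R^{\ad}L_{\sigma_E}\cJ_R$ and $\cJ_R^{\ad}R_{\sigma_E}\cJ_R$, so the regularization issue you anticipate never arises. You do resolve that issue correctly, either by $\varepsilon$-shifts with downward continuity or by the general transformer inequality, padded with zero blocks since $\cJ_R$ need not be square.

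Your route is more general. Nothing in it is specific to the logarithmic mean, so the same weight bounds give $\cJ_R^{\ad}\,\mathfrak m(L_{\sigma_E},R_{\sigma_E})\,\cJ_R\preceq\mathfrak m(L_{\sigma_B},R_{\sigma_B})$ for every operator mean $\mathfrak m$. That is, you get domination in every Petz monotone metric at once, not only in the BKM metric.
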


The proof uses the following channel-independent conversion of second-moment
control into domination by the inverse Kubo--Mori weights.

\begin{lemma}[From left-weight domination to BKM domination]
\label[lemma]{lem:left-weight-to-bkm}
Let $\sigma_B\succ0$ and $\omega_E\succ0$ be faithful states, and let
$\cJ:\mathsf K_B\to\mathsf K_E$ be complex-linear and adjoint preserving.  If
\begin{equation}
    \cJ^{\ad}\circ L_\omega\circ\cJ
    \preceq L_\sigma,
    \label{eq:left-weight-hypothesis}
\end{equation}
then, for every Hermitian $Z_E\in\mathsf K_E$,
\begin{equation}
    g_\omega(Z)
    \ge
    g_\sigma\!\left(\cJ^{\ad}(Z)\right).
    \label{eq:left-weight-bkm-conclusion}
\end{equation}
\end{lemma}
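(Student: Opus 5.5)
The plan is to route the comparison through the variational (Legendre) form of the BKM metric, so that it suffices to compare the Kubo--Mori superoperators themselves. Since $\Gamma_\sigma=\int_0^1 L_\sigma^sR_\sigma^{1-s}\,ds$ is a strictly positive superoperator on $\mathsf K_B$ for $\sigma\succ0$, one has, for every $X\in\mathsf K_B$,
\[
g_\sigma(X)=\sup_{Y\in\mathsf K_B}\Bigl[2\,\mathrm{Re}\langle Y,X\rangle_{\HS}-\langle Y,\Gamma_\sigma(Y)\rangle_{\HS}\Bigr],
\]
and analogously for $g_\omega$ on $\mathsf K_E$. Take $X=\cJ^{\ad}(Z)$. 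Then $\langle Y,\cJ^{\ad}(Z)\rangle=\langle \cJ(Y),Z\rangle$. If we knew the superoperator inequality
\begin{equation*}
\cJ^{\ad}\circ\Gamma_\omega\circ\cJ\preceq\Gamma_\sigma ,
\end{equation*}
then each term in the supremum would satisfy $2\,\mathrm{Re}\langle \cJ(Y),Z\rangle-\langle Y,\Gamma_\sigma Y\rangle\le 2\,\mathrm{Re}\langle W,Z\rangle-\langle W,\Gamma_\omega W\rangle$ with $W=\cJ(Y)$. Each such term is therefore at most $g_\omega(Z)$, which gives Eq.~\eqref{eq:left-weight-bkm-conclusion}. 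So the whole task reduces to upgrading the left-weight hypothesis Eq.~\eqref{eq:left-weight-hypothesis} to this Kubo--Mori inequality.

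The first step is to obtain the matching right-weight inequality from adjoint preservation. Let $C$ denote the antiunitary conjugation $C(Y)=Y^{\ad}$ on $\mathsf K_B$ and on $\mathsf K_E$. Then $\langle Y,R_\omega Y\rangle=\Tr(Y^{\ad}Y\omega)=\langle CY,L_\omega CY\rangle$. Because $\cJ\circ C=C\circ\cJ$, we get
\[
\langle Y,\cJ^{\ad}R_\omega\cJ Y\rangle=\langle C\cJ Y,L_\omega C\cJ Y\rangle=\langle \cJ(CY),L_\omega\cJ(CY)\rangle\le\langle CY,L_\sigma CY\rangle=\langle Y,R_\sigma Y\rangle .
\]
Hence $\cJ^{\ad}R_\omega\cJ\preceq R_\sigma$ holds as well.

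The second step writes each integrand as a weighted geometric mean. Since $L_\sigma$ and $R_\sigma$ are commuting positive invertible superoperators, $L_\sigma^sR_\sigma^{1-s}=R_\sigma\#_sL_\sigma$, the Kubo--Ando weighted geometric mean $A\#_sB=A^{1/2}(A^{-1/2}BA^{-1/2})^sA^{1/2}$; the same identity holds with $\omega$. Two standard properties of Kubo--Ando means then apply. The first is the transformer inequality $T^{\ad}(A\#_sB)T\preceq(T^{\ad}AT)\#_s(T^{\ad}BT)$ for an arbitrary linear $T$. The second is joint monotonicity of $\#_s$ in both arguments. Applied with $T=\cJ$, they give
\[
\cJ^{\ad}\bigl(L_\omega^sR_\omega^{1-s}\bigr)\cJ\preceq(\cJ^{\ad}R_\omega\cJ)\#_s(\cJ^{\ad}L_\omega\cJ)\preceq R_\sigma\#_sL_\sigma=L_\sigma^sR_\sigma^{1-s}.
\]
Integrating over $s\in[0,1]$ yields $\cJ^{\ad}\Gamma_\omega\cJ\preceq\Gamma_\sigma$, which completes the argument.

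The main obstacle is the transformer inequality when $\cJ$ is not invertible, for instance when it is not injective or $\dim\mathsf K_E\neq\dim\mathsf K_B$. In that case $\cJ^{\ad}R_\omega\cJ$ may be singular, and the mean must be defined through the limit $A_\varepsilon=A+\varepsilon\idop$. I would first prove the inequality for invertible $T$ on a common space; there it is an equality. I would then extend it to general $T$ using the downward-continuity definition of $\#_s$ for positive semidefinite arguments, for example by padding $\cJ$ to a square map and perturbing it to $\cJ+\varepsilon V$ with $V$ invertible. As a cross-check that avoids operator means altogether, the inequality can also be obtained from joint operator concavity of $(A,B)\mapsto A^{1-s}B^s$ on commuting pairs, which is Lieb--Ando concavity in its superoperator form. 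That concavity gives the same congruence bound via Choi's characterization of Kubo--Ando means as the maximal $X$ with $\begin{pmatrix}A & X\\ X & B\end{pmatrix}$-type positivity conditions, which is stable under congruence by $\cJ\oplus\cJ$.
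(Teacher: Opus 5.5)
Your argument is correct, but it takes a genuinely different route from the paper.

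\textbf{Your route.} You push the congruence by $\cJ$ through $\Gamma_\theta=\int_0^1L_\theta^sR_\theta^{1-s}\,ds$ itself. You read each integrand as the weighted geometric mean $R_\theta\#_sL_\theta$, then use the Kubo--Ando transformer inequality and joint monotonicity to obtain $\cJ^{\ad}\Gamma_\omega\cJ\preceq\Gamma_\sigma$, followed by one Legendre duality.

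\textbf{The paper's route.} The paper never compares $\Gamma_\omega$ and $\Gamma_\sigma$ directly. It uses $\Gamma_\theta^{-1}=\int_0^1\mathcal A_{\theta,t}^{-1}\,dt$ with the arithmetic interpolation $\mathcal A_{\theta,t}=(1-t)L_\theta+tR_\theta$. The left- and right-weight inequalities then combine by a convex combination into $\cJ^{\ad}\mathcal A_{\omega,t}\cJ\preceq\mathcal A_{\sigma,t}$. The same variational formula you use is applied pointwise in $t$, and the result is integrated.

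\textbf{What each costs.} The paper's route is fully elementary. Only $\mathcal A_{\omega,t}$ and $\mathcal A_{\sigma,t}$ need to be invertible, so a non-square or non-injective $\cJ$ is harmless. Your route imports operator-mean theory, together with exactly the singular-argument issue you flagged. That issue is settled by padding with zero blocks (means split over direct sums) and citing the transformer inequality for arbitrary $T$, a standard consequence of the Kubo--Ando axioms.

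Your perturbation sketch needs more care than stated. Kubo--Ando means are continuous only along decreasing sequences, not jointly continuous at singular arguments. With $T_\varepsilon=T+\varepsilon V$, the bound $T_\varepsilon^{\ad}AT_\varepsilon\preceq(1+\delta)T^{\ad}AT+(1+\delta^{-1})\varepsilon^2V^{\ad}AV$, combined with monotonicity and downward continuity, repairs it. Also, the $2\times2$ block characterization in your cross-check is Ando's and covers only $s=1/2$.

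\textbf{Relative strength.} Neither route is stronger. The paper's variational step never uses that $Z$ is Hermitian, so it yields $\cJ\Gamma_\sigma^{-1}\cJ^{\ad}\preceq\Gamma_\omega^{-1}$. This is equivalent to your $\cJ^{\ad}\Gamma_\omega\cJ\preceq\Gamma_\sigma$. What your formulation buys is that it visibly applies to any operator mean of $(L_\theta,R_\theta)$, i.e.\ any Petz monotone metric. The paper's argument extends likewise, because the reciprocal of every operator mean is an integral of reciprocals of weighted arithmetic means.
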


\begin{proof}
Since a complex-linear adjoint-preserving map preserves the Hermitian
subspace, so does its Hilbert--Schmidt adjoint.  In particular,
$\cJ^{\ad}(Z)$ is Hermitian whenever $Z$ is Hermitian.
Adjoint preservation first supplies the corresponding right-weight inequality.  Indeed, Eq.~\eqref{eq:left-weight-hypothesis} applied to
$Y^{\ad}$ gives
\begin{align*}
    \langle Y,R_\sigma(Y)\rangle_{\HS}
    =
    \langle Y^{\ad},L_\sigma(Y^{\ad})\rangle_{\HS}
    \ge
    \langle\cJ(Y^{\ad}),
    L_\omega(\cJ(Y^{\ad}))\rangle_{\HS}
    =
    \langle\cJ(Y),R_\omega(\cJ(Y))\rangle_{\HS}.
\end{align*}
Hence
\begin{equation}
    \cJ^{\ad}\circ R_\omega\circ\cJ
    \preceq R_\sigma.
    \label{eq:right-weight-conclusion}
\end{equation}
For a faithful state $\theta$, put
\begin{equation}
    \mathcal A_{\theta,t}\coloneqq(1-t)L_\theta+tR_\theta,
    \qquad 0\le t\le1.
    \label{eq:arithmetic-interpolation}
\end{equation}
In an eigenbasis
$\theta=\sum_i\lambda_i|i\rangle\!\langle i|$,
\begin{equation}
    \mathcal A_{\theta,t}(|i\rangle\!\langle j|)
    =
    \bigl((1-t)\lambda_i+t\lambda_j\bigr)
    |i\rangle\!\langle j|,
    \qquad
    \int_0^1
    \frac{dt}{(1-t)\lambda_i+t\lambda_j}
    =
    \frac{\log\lambda_i-\log\lambda_j}
    {\lambda_i-\lambda_j}.
    \label{eq:arithmetic-eigenvalue}
\end{equation}
The quotient is understood as $1/\lambda_i$ when
$\lambda_i=\lambda_j$.  Comparing it with the reciprocal of the logarithmic mean in Eq.~\eqref{eq:kubo-mori-map} gives
\begin{equation}
    \Gamma_\theta^{-1}
    =
    \int_0^1\mathcal A_{\theta,t}^{-1}\,dt.
    \label{eq:bkm-arithmetic}
\end{equation}
Combining Eq.~\eqref{eq:left-weight-hypothesis} and
Eq.~\eqref{eq:right-weight-conclusion} also gives
\begin{equation}
    \cJ^{\ad}\circ\mathcal A_{\omega,t}\circ\cJ
    \preceq\mathcal A_{\sigma,t}
    \qquad(0\le t\le1).
    \label{eq:arithmetic-order}
\end{equation}
For every self-adjoint, strictly positive operator $\mathcal A$ on a
finite-dimensional Hilbert space,
\begin{equation}
    2\operatorname{Re}\langle Y,Z\rangle
    -\langle Y,\mathcal A(Y)\rangle=
    \langle Z,\mathcal A^{-1}(Z)\rangle
    -\bigl\langle
    Y-\mathcal A^{-1}(Z),
    \mathcal A\bigl(Y-\mathcal A^{-1}(Z)\bigr)
    \bigr\rangle.
    \label{eq:inverse-variational}
\end{equation}
The last term is nonnegative and vanishes precisely when
$Y=\mathcal A^{-1}(Z)$.  Consequently,
\[
\langle Z,\mathcal A^{-1}(Z)\rangle
=
\sup_Y
\left\{
2\operatorname{Re}\langle Y,Z\rangle
-\langle Y,\mathcal A(Y)\rangle
\right\}.
\]
Fix $t\in[0,1]$.  Applying
Eq.~\eqref{eq:inverse-variational} first on $\mathsf K_E$ and then on
$\mathsf K_B$ gives
\begin{align}
    \langle Z,\mathcal A_{\omega,t}^{-1}(Z)\rangle
    &=
    \sup_{W\in\mathsf K_E}
    \left\{
    2\operatorname{Re}\langle W,Z\rangle
    -\langle W,\mathcal A_{\omega,t}(W)\rangle
    \right\}\notag\\
    &\ge
    \sup_{Y\in\mathsf K_B}
    \left\{
    2\operatorname{Re}\langle\cJ(Y),Z\rangle
    -\langle\cJ(Y),
    \mathcal A_{\omega,t}(\cJ(Y))\rangle
    \right\}\notag\\
    &\ge
    \sup_{Y\in\mathsf K_B}
    \left\{
    2\operatorname{Re}
    \langle Y,\cJ^{\ad}(Z)\rangle
    -\langle Y,\mathcal A_{\sigma,t}(Y)\rangle
    \right\}\notag\\
    &=
    \left\langle
    \cJ^{\ad}(Z),
    \mathcal A_{\sigma,t}^{-1}
    \bigl(\cJ^{\ad}(Z)\bigr)
    \right\rangle.
    \label{eq:inverse-weight-transfer}
\end{align}
The first inequality restricts the environment variational domain to
$W=\cJ(Y)$.  The second uses
Eq.~\eqref{eq:arithmetic-order} together with
$\langle\cJ(Y),Z\rangle_{\HS}
=\langle Y,\cJ^{\ad}(Z)\rangle_{\HS}$.

Integrating Eq.~\eqref{eq:inverse-weight-transfer} over $t$ and using
Eq.~\eqref{eq:bkm-arithmetic} yields
\begin{align}
    g_\omega(Z)
    =
    \int_0^1
    \langle Z,\mathcal A_{\omega,t}^{-1}(Z)\rangle_{\HS}\,dt
    \ge
    \int_0^1
    \left\langle
    \cJ^{\ad}(Z),
    \mathcal A_{\sigma,t}^{-1}
    \bigl(\cJ^{\ad}(Z)\bigr)
    \right\rangle_{\HS}\,dt
    =
    g_\sigma\!\left(\cJ^{\ad}(Z)\right),
\end{align}
proving Eq.~\eqref{eq:left-weight-bkm-conclusion}.
\end{proof}

\begin{proof}[Proof of \Cref{thm:criterion}]
Fix a finite reference $R$, a faithful state $\rho_{RA}\succ0$, and a
traceless Hermitian $X_{RA}$.  Set
\[
\sigma_{RB}\coloneqq\cN_R(\rho),\qquad
\omega_{RE}\coloneqq\cM_R(\rho),\qquad
\cJ_R\coloneqq\id_R\otimes\cJ.
\]
Since $\rho_{RA}\succeq a\,\idop_R\otimes\idop_A$ for some $a>0$,
Eq.~\eqref{eq:criterion-faithful-identity} gives
\begin{equation}
    \sigma_{RB}
    \succeq a\,\idop_R\otimes\cN(\idop_A)\succ0,
    \qquad
    \omega_{RE}
    \succeq a\,\idop_R\otimes\cM(\idop_A)\succ0.
    \label{eq:criterion-faithful-outputs}
\end{equation}
The exact reconstruction identity amplifies to
\begin{equation}
    \cJ_R^{\ad}\circ\cM_R
    =
    \id_R\otimes(\cJ^{\ad}\circ\cM)
    =
    \cN_R.
    \label{eq:criterion-amplified-lift}
\end{equation}
By Proposition~\ref{prop:variance-defect-equivalence}, complete variance
domination is equivalent to Eq.~\eqref{eq:complete-second-moment}.  Taking
$\tau=\rho$ in that inequality gives
\[
\cJ_R^{\ad}\circ L_\omega\circ\cJ_R
\preceq L_\sigma.
\]
The map $\cJ_R$ is adjoint preserving, so \Cref{lem:left-weight-to-bkm}, applied with $Z=\cM_R(X)$, gives
\begin{align}
    g_\omega\!\left(\cM_R(X)\right)
    \ge
    g_\sigma\!\left(\cJ_R^{\ad}(\cM_R(X))\right) =
    g_\sigma\!\left(\cN_R(X)\right),
\end{align}
where the equality is Eq.~\eqref{eq:criterion-amplified-lift}.
Since $R$, $\rho$, and $X$ were arbitrary, $\cM\BKMord\cN$.
\end{proof}

Combining \Cref{thm:criterion} with the BKM-to-relative-entropy bridge gives
the following consequence.

\begin{corollary}[Zero capacities from complete variance domination]
\label{cor:criterion-consequence}
Let $\cN_{A\to B}$ be a finite-dimensional quantum channel with
complementary channel $\cN^c_{A\to E}$.  Assume that
\begin{equation}
    \cN(\idop_A)\succ0,
    \qquad
    \cN^c(\idop_A)\succ0.
\end{equation}
If there exists a complete variance-dominating signed lift of $\cN$
through $\cN^c$, then
\begin{equation}
    \cN^c\REord\cN.
    \label{eq:criterion-re-order}
\end{equation}
Consequently, for every $n\ge1$,
\begin{equation}
    P^{(1)}(\cN^{\otimes n})
    =
    Q^{(1)}(\cN^{\otimes n})
    =
    0,
    \qquad
    P(\cN)=Q(\cN)=0.
    \label{eq:criterion-capacity-consequence}
\end{equation}
\end{corollary}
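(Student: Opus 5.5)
The proof will chain the three results already established, with $\cM\coloneqq\cN^c$. The hypotheses $\cN(\idop_A)\succ0$ and $\cN^c(\idop_A)\succ0$ are exactly the faithfulness conditions in Eq.~\eqref{eq:criterion-faithful-identity}. So, given a complete variance-dominating signed lift $\cJ$ of $\cN$ through $\cN^c$, \Cref{thm:criterion} applies directly with $\cM=\cN^c$ and yields $\cN^c\BKMord\cN$.

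Next I will invoke \Cref{lem:complete-bkm-to-cre}, with the first channel $\cN^c_{A\to E}$ and the second channel $\cN_{A\to B}$. Its hypotheses are again the two faithfulness conditions, so the local Hessian order $\cN^c\BKMord\cN$ upgrades to the global order $\cN^c\REord\cN$, which is Eq.~\eqref{eq:criterion-re-order}. The only care needed is orientation. In Definition~\ref{def:complete-variance-lift} and \Cref{thm:criterion}, the dominating channel is the source $\cM$ of the lift, here $\cN^c$. This matches the left-hand side of $\REord$ required in Eq.~\eqref{eq:complement-less-noisy}.

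Finally, I will apply \Cref{cor:anti-less-noisy-capacity}. Its hypothesis Eq.~\eqref{eq:complement-less-noisy} is precisely what was just obtained. It gives $P^{(1)}(\cN^{\otimes n})=Q^{(1)}(\cN^{\otimes n})=0$ for all $n$ through the tensorization Eq.~\eqref{eq:tensorized-order}. The vanishing of $P(\cN)$ and $Q(\cN)$ then follows from the regularization formulas Eq.~\eqref{eq:quantum-capacity-regularization} and Eq.~\eqref{eq:private-capacity-regularization}.

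There is no genuine obstacle, since each step is a direct citation. The one point to double-check is that faithfulness of $\cN(\idop_A)$ and $\cN^c(\idop_A)$ is all that both \Cref{thm:criterion} and \Cref{lem:complete-bkm-to-cre} require. A reader might worry that $\cN^c(\idop_A)\succ0$ fails for a poorly chosen environment dimension. However, the assumption is stated explicitly here; in applications one restricts $E$ to the support of $\cN^c(\idop_A)$, which does not affect the capacities.
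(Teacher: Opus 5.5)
Your proposal is correct and matches the paper's proof: you apply \Cref{thm:criterion} with $\cM=\cN^c$ to get $\cN^c\BKMord\cN$, upgrade this to $\cN^c\REord\cN$ via \Cref{lem:complete-bkm-to-cre}, and conclude with \Cref{cor:anti-less-noisy-capacity}. Your orientation check is right, and your closing remark about restricting $E$ is harmless but unnecessary, since faithfulness is assumed in the statement.
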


\begin{proof}
Apply~\Cref{thm:criterion} to the pair $(\cN,\cN^c)$ to obtain $\cN^c\BKMord\cN$.  The BKM-to-relative-entropy bridge gives $\cN^c\REord\cN$, and the capacity conclusions follow from~\Cref{cor:anti-less-noisy-capacity}.
\end{proof}

\begin{remark}[Operational meaning]
For a fixed input state and a fixed Hermitian receiver observable $Y_B$, the measurement of the Hermitian operator $\cJ(Y_B)$ on independent outputs of $\cN^{c}$ gives an unbiased estimator of $\Tr[\cN(\rho)Y_B]$.  Its sample-mean variance is no larger than that obtained by measuring $Y_B$ on independent outputs of $\cN$.

This statement is observable by observable.  Since $\cJ$ need not be positive, it does not simulate the receiver state, an arbitrary receiver
POVM, or its outcome distribution.  The reference system in Eq.~\eqref{eq:complete-variance-domination} is likewise a stability test:
the comparison remains valid for joint, possibly nonproduct observables on a retained reference and the respective channel output. It does not grant the
physical environment access to that reference.
\end{remark}

\section{Zero private and quantum capacities of the qutrit channel}
\label{sec:certificate}

The general criterion reduces the zero-capacity part of \Cref{thm:intro-main} to three channel-specific ingredients: full-rank images of the identity, an adjoint-preserving unbiased signed lift, and complete variance domination. \Cref{prop:variance-defect-equivalence} allows us to verify the last, reference-complete condition through an exact Kraus factorization of the outer defect.

\subsection{Channel structure and a complementary channel}

We work with qutrit systems
$\cH_A\simeq\cH_B\simeq\cH_3\coloneqq\mathbb C^3$.  The channel under
consideration is
\begin{equation}
\Lambda_{A\to B}(X_A)
=
\frac12X_A+\frac14
\bigl(\Tr(X_A)\idop_B-X_A^{\mathsf T}\bigr).
\label{eq:channel}
\end{equation}
It is the $x=\tfrac12$ member of the noisy Werner--Holevo
family~\cite{WernerHolevo2002,RoofehKarimipour2024}.  We suppress input--output
labels when no ambiguity can arise. Every $X_A\in\Lin(\cH_A)$ admits the orthogonal decomposition
\begin{equation}
X_A=\frac{\Tr X_A}{3}\idop_A+S_A+T_A,
\qquad
S_A^{\mathsf T}=S_A,\quad \Tr S_A=0,\quad
T_A^{\mathsf T}=-T_A.
\label{eq:operator-decomp}
\end{equation}
Substitution into Eq.~\eqref{eq:channel} gives
\begin{equation}
\Lambda(X_A)
=
\frac{\Tr X_A}{3}\idop_B+\frac14S_A+\frac34T_A.
\label{eq:sector-action}
\end{equation}
Thus $\Lambda$ acts diagonally on the scalar, symmetric-traceless, and skew-symmetric operator subspaces, whose dimensions are $1$, $5$, and
$3$, respectively.  This decomposition provides a useful structural picture of the channel and helps motivate the signed lift constructed below.

We next fix a complementary channel adapted to this representation. Let $\epsilon_{\mu jk}$ be the Levi--Civita tensor, with indices in
$\{0,1,2\}$.  Relative to the fixed bases, define
$A_\mu\in\Lin(\cH_3)$ by
\begin{equation}
(A_\mu)_{jk}\coloneqq\epsilon_{\mu jk},
\qquad \mu=0,1,2.
\label{eq:A-mu}
\end{equation}
The contraction
\begin{equation}
\sum_{\mu=0}^2
\epsilon_{\mu jk}\epsilon_{\mu\ell m}
=
\delta_{j\ell}\delta_{km}-\delta_{jm}\delta_{k\ell}
\label{eq:levi-civita-contraction}
\end{equation}
fixes the transpose and sign conventions in the identities below. These real skew-symmetric operators satisfy
\begin{equation}
\sum_{\mu=0}^2A_\mu X A_\mu^{\ad}
=\Tr(X)\idop_3-X^{\mathsf T},
\qquad
\sum_{\mu=0}^2A_\mu^{\ad}A_\mu=2\idop_3.
\label{eq:epsilon-contractions}
\end{equation}
Hence
\begin{equation}
K_0=\frac{\idop_3}{\sqrt2},
\qquad
K_{\mu+1}=\frac{A_\mu}{2},
\quad \mu=0,1,2,
\label{eq:kraus}
\end{equation}
is a Kraus representation of $\Lambda$, where the fixed identification $\cH_A\simeq\cH_B\simeq\cH_3$ is understood.  This is the $x=\tfrac12$, $d=3$ specialization of a previously established Kraus and complementary-channel construction~\cite{RoofehKarimipour2024}.

Let $\cH_E\simeq\mathbb C^4$, with computational basis $\{|a\rangle_E\}_{a=0}^3$.  Fix the associated complementary channel 
\begin{equation}
\Lambda^c_{A\to E}:\Lin(\cH_A)\to\Lin(\cH_E),\qquad
{}_E\langle a|\Lambda^c_{A\to E}(X_A)|b\rangle_E
\coloneqq\Tr(K_aX_AK_b^{\ad}),
\qquad 0\le a,b\le3.
\label{eq:complement}
\end{equation}
For $Y\in\Lin(\cH_3)$, set
\begin{equation}
\bm a(Y)\coloneqq
\begin{pmatrix}
    Y_{12}-Y_{21}\\
    Y_{20}-Y_{02}\\
    Y_{01}-Y_{10}
\end{pmatrix}.
\label{eq:a-vector}
\end{equation}
A direct contraction gives
\begin{equation}
\Lambda^c(X_A)=
\begin{pmatrix}
    \frac12\Tr X_A &
    \frac1{2\sqrt2}\bm a(X_A)^{\mathsf T}\\[1mm]
    -\frac1{2\sqrt2}\bm a(X_A) &
    \frac14\bigl(\Tr(X_A)\idop_3-X_A^{\mathsf T}\bigr)
\end{pmatrix}.
\label{eq:complement-block}
\end{equation}
In particular,
\begin{equation}
\Lambda(\idop_A)=\idop_B,\qquad
\Lambda^c(\idop_A)=
\operatorname{diag}\!\left(\frac32,\frac12,\frac12,\frac12\right)\succ0.
\label{eq:faithful-identity-output}
\end{equation}
The strict positivity in Eq.~\eqref{eq:faithful-identity-output} guarantees that
faithful inputs produce faithful receiver and environment outputs.

Writing
$Z=\begin{psmallmatrix}\alpha&u^{\mathsf T}\\v&C\end{psmallmatrix}$,
the Hilbert--Schmidt adjoint is
\begin{equation}
(\Lambda^c)^{\ad}(Z)
=\frac\alpha2\idop_3
+\frac1{2\sqrt2}\sum_{\mu=0}^2(u_\mu-v_\mu)A_\mu
+\frac14\bigl[(\Tr C)\idop_3-C^{\mathsf T}\bigr].
\label{eq:complement-adjoint}
\end{equation}

\subsection{A complete variance-dominating signed lift}

Define the complex-linear map
$\cJ_{B\to E}:\Lin(\cH_B)\to\Lin(\cH_E)$ by
\begin{equation}
\cJ_{B\to E}(Y_B)=\frac13
\begin{pmatrix}
    \Tr Y_B & \sqrt2\,\bm a(Y_B)^{\mathsf T}\\[1mm]
    -\sqrt2\,\bm a(Y_B)&
    2\Tr(Y_B)\idop_B-Y_B-2Y_B^{\mathsf T}
\end{pmatrix}.
\label{eq:J-lift}
\end{equation}
We suppress the system label on $\cJ$ below.

\begin{lemma}[Exact lift]
\label[lemma]{lem:lift}
The lift is adjoint preserving and satisfies
\begin{equation}
    (\Lambda^c)^{\ad}\circ\cJ=\Lambda^{\ad}=\Lambda,
    \qquad
    \cJ(Y^{\ad})=\cJ(Y)^{\ad}.
    \label{eq:lift-identities}
\end{equation}
Equivalently,
\begin{equation}
    \cJ^{\ad}\circ\Lambda^c=\Lambda.
    \label{eq:adjoint-lift}
\end{equation}
The map $\cJ$ is not positive.
\end{lemma}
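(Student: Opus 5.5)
Three things have to be shown: adjoint preservation, the exact reconstruction identity, and non-positivity. In each case we substitute the block form of $\cJ$ from Eq.~\eqref{eq:J-lift} into the adjoint formula Eq.~\eqref{eq:complement-adjoint}.

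\emph{Adjoint preservation.} The three blocks behave as follows under $Y\mapsto Y^{\ad}$:
\begin{itemize}
\item the scalar $\Tr Y$ becomes $\overline{\Tr Y}$;
\item the lower-right block $2\Tr(Y)\idop-Y-2Y^{\mathsf T}$ becomes its conjugate transpose;
\item the vector satisfies $\bm a(Y^{\ad})=\overline{\bm a(Y)}$, since $(Y^{\ad})_{jk}-(Y^{\ad})_{kj}=\overline{Y_{kj}}-\overline{Y_{jk}}=\overline{(Y_{jk}-Y_{kj})}$ up to the index ordering in Eq.~\eqref{eq:a-vector}. Hence the off-diagonal pair $(\sqrt2\,\bm a^{\mathsf T},-\sqrt2\,\bm a)$ is mapped to its Hermitian conjugate pattern.
\end{itemize}
Together these give $\cJ(Y^{\ad})=\cJ(Y)^{\ad}$.

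\emph{Reconstruction identity.} Write $\cJ(Y)=\begin{psmallmatrix}\alpha&u^{\mathsf T}\\v&C\end{psmallmatrix}$ with
\[
\alpha=\tfrac13\Tr Y,\qquad u=\tfrac{\sqrt2}{3}\bm a(Y),\qquad v=-u,\qquad C=\tfrac13\bigl(2\Tr(Y)\idop-Y-2Y^{\mathsf T}\bigr).
\]
We evaluate the three terms of Eq.~\eqref{eq:complement-adjoint} separately.
\begin{itemize}
\item Scalar term: $\frac{\alpha}{2}\idop=\frac16\Tr(Y)\idop$.
\item Skew term: $\frac{1}{2\sqrt2}\sum_\mu 2u_\mu A_\mu=\frac13\sum_\mu a_\mu(Y)A_\mu$. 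Using Eq.~\eqref{eq:A-mu} and the contraction Eq.~\eqref{eq:levi-civita-contraction}, $\sum_\mu a_\mu(Y)A_\mu$ should equal $Y-Y^{\mathsf T}$, possibly up to a global sign. This sign must be fixed carefully, because the contraction convention determines it.
\item Last term: $\Tr C=\frac13(6\Tr Y-3\Tr Y)=\Tr Y$ and $C^{\mathsf T}=\frac13(2\Tr(Y)\idop-Y^{\mathsf T}-2Y)$. Therefore
\[
\tfrac14\bigl[(\Tr C)\idop-C^{\mathsf T}\bigr]=\tfrac14\bigl[\tfrac13\Tr(Y)\idop+\tfrac13Y^{\mathsf T}+\tfrac23Y\bigr].
\]
\end{itemize}
Summing, the identity coefficient is $\frac16+\frac1{12}=\frac14$. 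The coefficients of $Y$ and $Y^{\mathsf T}$ are $\frac16\pm\frac13$, the sign depending on the skew term, and the target is $\Lambda(Y)=\frac12Y+\frac14\Tr(Y)\idop-\frac14Y^{\mathsf T}$. With the correct sign the skew term contributes $+\frac13(Y-Y^{\mathsf T})$, which gives $\frac16+\frac13=\frac12$ for $Y$ and $\frac1{12}-\frac13=-\frac14$ for $Y^{\mathsf T}$, as required.

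Pinning down that Levi--Civita sign, namely that $\sum_\mu a_\mu(Y)A_\mu=+(Y-Y^{\mathsf T})$ under the stated convention for $\bm a$, is the only genuinely delicate step. I will check it on a single component. Take $\mu=2$: then $(A_2)_{01}=1$, $(A_2)_{10}=-1$, and $a_2(Y)=Y_{01}-Y_{10}$, so the $(0,1)$ entry of $a_2A_2$ is $Y_{01}-Y_{10}=(Y-Y^{\mathsf T})_{01}$, which is the correct sign. The other components follow by cyclic symmetry.

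Since $\Lambda$ has Hermitian Kraus-structure symmetry (the $A_\mu$ are real antisymmetric, so $A_\mu^{\ad}=-A_\mu$ and $A_\mu X A_\mu^{\ad}$ is invariant under the adjoint swap), we have $\Lambda^{\ad}=\Lambda$. Taking adjoints of $(\Lambda^c)^{\ad}\circ\cJ=\Lambda$ then yields Eq.~\eqref{eq:adjoint-lift}.

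\emph{Non-positivity.} It suffices to exhibit one rank-one projector with a non-PSD image. Take $Y=\ketbra{0}{0}$. Then $\bm a(Y)=0$ and the lower-right block of $\cJ(Y)$ is
\[
\tfrac13\bigl(2\idop-3\ketbra{0}{0}\bigr),
\]
whose $(0,0)$ entry is $-\frac13<0$. Hence $\cJ$ is not positive.
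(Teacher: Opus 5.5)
Your route is the same as the paper's: substitute the block form of $\cJ$ into Eq.~\eqref{eq:complement-adjoint}, use $\sum_\mu a_\mu(Y)A_\mu=Y-Y^{\mathsf T}$, and exhibit $\cJ(E_{00})=\frac13\operatorname{diag}(1,-1,2,2)$. Your reconstruction computation is correct, and so are your justification of $\Lambda^{\ad}=\Lambda$ and the non-positivity step.

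The gap is in the adjoint-preservation step, where the identity you state, $\bm a(Y^{\ad})=\overline{\bm a(Y)}$, is false. Since $(Y^{\ad})_{jk}-(Y^{\ad})_{kj}=\overline{Y_{kj}}-\overline{Y_{jk}}=-\overline{(Y_{jk}-Y_{kj})}$, the correct relation is $\bm a(Y^{\ad})=-\overline{\bm a(Y)}$, which is what the paper uses. For instance, $\bm a(E_{12})=(1,0,0)^{\mathsf T}$ while $\bm a(E_{21})=(-1,0,0)^{\mathsf T}$. Equivalently, $\bm a(Y)$ is purely imaginary for Hermitian $Y$, whereas your identity would force it to be real. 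All three components of $\bm a$ have the form $Y_{jk}-Y_{kj}$, so no ``index ordering'' can absorb this sign.

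The sign decides the claim. The two off-diagonal blocks of $\cJ(Y)$ carry opposite signs. So the top-right block of $\cJ(Y)^{\ad}$, which is the conjugate transpose of the bottom-left block of $\cJ(Y)$, equals $-\frac{\sqrt2}{3}\overline{\bm a(Y)}^{\mathsf T}$. This must match the top-right block $\frac{\sqrt2}{3}\bm a(Y^{\ad})^{\mathsf T}$ of $\cJ(Y^{\ad})$, and that happens exactly when $\bm a(Y^{\ad})=-\overline{\bm a(Y)}$. With your sign, the two blocks would differ by a factor $-1$ whenever $\bm a(Y)\neq0$. Read literally, your bullet would therefore refute adjoint preservation rather than prove it. The appeal to a ``Hermitian conjugate pattern'' skips exactly this check.

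To repair the step, use the correct identity and match the blocks explicitly. The sign that needed care here is this one, not the Levi--Civita sign you singled out, which you did verify correctly.
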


\begin{proof}
The identity
\begin{equation}
    \sum_{\mu=0}^2a_\mu(Y)A_\mu=Y-Y^{\mathsf T}
    \label{eq:aA}
\end{equation}
follows from the Levi--Civita contraction.  Put $t=\Tr Y$ and
$C_Y=\frac13(2t\idop_3-Y-2Y^{\mathsf T})$, so $\Tr C_Y=t$.  Substitution into
Eq.~\eqref{eq:complement-adjoint} gives
\begin{align}
    \bigl((\Lambda^c)^{\ad}\circ\cJ\bigr)(Y)
    &=\frac t6\idop_3+\frac13(Y-Y^{\mathsf T})
    +\frac1{12}(t\idop_3+Y^{\mathsf T}+2Y)\notag\\
    &=\frac12Y+\frac14(t\idop_3-Y^{\mathsf T})
    =\Lambda(Y).
\end{align}
Adjoint preservation follows from
$\bm a(Y^{\ad})=-\overline{\bm a(Y)}$.  Finally,
\[
\cJ(E_{00})=\frac13\operatorname{diag}(1,-1,2,2)
\]
is not positive.
\end{proof}

For every input state $\rho$ and Hermitian receiver observable
$Y=Y^{\ad}$,
\begin{equation}
\Tr\!\left[Y\Lambda(\rho)\right]
=
\Tr\!\left[\cJ(Y)\Lambda^c(\rho)\right].
\label{eq:signed-reconstruction}
\end{equation}
Because $\cJ$ is adjoint preserving, $\cJ(Y)$ is a legitimate Hermitian environment observable whenever $Y$ is Hermitian.  Thus the environment can reproduce the expectation of every receiver observable by measuring the corresponding $\cJ(Y)$.  The shadow information reconstruction is nevertheless \emph{signed} where one needs to apply physical measurements and subsequent classical post-processing.

It remains to verify that this observable-by-observable reconstruction is variance dominating in the presence of arbitrary references.  By
Proposition~\ref{prop:variance-defect-equivalence}, it suffices to show that the outer defect of the signed lift is completely positive.
Let $B'$ be a copy of $B$, and make the Hilbert--Schmidt identification $\Lin(\cH_B)\simeq\cH_B\otimes\cH_{B'}$ explicit through
$E^B_{jk}\leftrightarrow|j\rangle_B|k\rangle_{B'}$.  For $\tau_A\in\Lin(\cH_A)$, define
\begin{equation}
\Delta(\tau_A)
\coloneqq L_{\Lambda(\tau_A)}
-\cJ^{\ad}\circ L_{\Lambda^c(\tau_A)}\circ\cJ
\in\Lin\bigl(\Lin(\cH_B)\bigr).
\label{eq:defect}
\end{equation}
For $\mu=0,1,2$, define
$G_\mu:\cH_A\to\cH_B\otimes\cH_{B'}$ by
\begin{equation}
G_\mu|i\rangle_A
=
|i\rangle_B|\mu\rangle_{B'}
+\frac12|\mu\rangle_B|i\rangle_{B'}
-\frac12\delta_{\mu i}
\sum_{j=0}^2|j\rangle_B|j\rangle_{B'}.
\label{eq:G-action}
\end{equation}

\Needspace{10\baselineskip}
\begin{proposition}[Exact CP defect]
\label{prop:cp-defect}
For every $\tau_A\in\Lin(\cH_A)$,
\begin{equation}
    \Delta(\tau_A)
    =\frac29\sum_{\mu=0}^2G_\mu \tau_A G_\mu^{\ad}.
    \label{eq:cp-defect}
\end{equation}
Hence
$\Delta:\Lin(\cH_A)\to\Lin(\Lin(\cH_B))$ is completely positive.
\end{proposition}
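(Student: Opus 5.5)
Both sides of the claimed identity are linear in $\tau_A$. It therefore suffices to check Eq.~\eqref{eq:cp-defect} on the matrix units $\tau_A=|i\rangle\!\langle i'|$, $i,i'\in\{0,1,2\}$. Equivalently, we compare the bilinear forms
\[
\Phi(\tau;W,Y)\coloneqq\bigl\langle W,\Delta(\tau)(Y)\bigr\rangle_{\HS}
=\Tr\!\left[W^{\ad}\Lambda(\tau)Y\right]-\Tr\!\left[\cJ(W)^{\ad}\Lambda^c(\tau)\cJ(Y)\right]
\]
with $\frac29\sum_\mu\langle W|G_\mu\tau G_\mu^{\ad}|Y\rangle$, using the vectorization $E_{jk}\leftrightarrow|j\rangle|k\rangle$. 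Once the identity holds, complete positivity is immediate: Eq.~\eqref{eq:cp-defect} is an outer Kraus form as in Eq.~\eqref{eq:defect-outer-kraus}, with operators $\sqrt{2/9}\,G_\mu$. Its amplification $(\id_R\otimes\Delta)(\tau_{RA})=\frac29\sum_\mu(\idop_R\otimes G_\mu)\tau_{RA}(\idop_R\otimes G_\mu)^{\ad}$ is therefore positive semidefinite for every $\tau_{RA}\succeq0$.

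\emph{Step 1 (receiver term).} By Eq.~\eqref{eq:channel}, $\Tr[W^{\ad}\Lambda(\tau)Y]$ splits into three parts: $\frac12\Tr[W^{\ad}\tau Y]$, $\frac14\Tr\tau\,\langle W,Y\rangle$, and $-\frac14\Tr[W^{\ad}\tau^{\mathsf T}Y]$. Each has an elementary tensor expression on $\cH_A\to\cH_B\otimes\cH_{B'}$. For instance, $\Tr[W^{\ad}\tau Y]=\sum_\mu\langle W|(\tau\otimes|\mu\rangle\!\langle\mu|)|Y\rangle$ under the vectorization.

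\emph{Step 2 (environment term).} Write $\cJ(Y)=\begin{psmallmatrix}\alpha_Y&u_Y^{\mathsf T}\\-u_Y&C_Y\end{psmallmatrix}$, where $\alpha_Y=\frac13\Tr Y$, $u_Y=\frac{\sqrt2}{3}\bm a(Y)$, and $C_Y=\frac13(2\Tr(Y)\idop-Y-2Y^{\mathsf T})$. Also write $\Lambda^c(\tau)$ in the block form Eq.~\eqref{eq:complement-block}. Expanding the $2\times2$ block product gives four kinds of terms:
\begin{itemize}
\item a scalar--scalar term $\frac12\Tr\tau\,\bar\alpha_W\alpha_Y$;
\item scalar--vector cross terms involving $\bm a(\tau)$;
\item vector--vector terms of the form $\overline{u_W}^{\mathsf T}(\cdots)u_Y$, weighted by $\frac12\Tr\tau$ and by the block $\frac14(\Tr\tau\,\idop-\tau^{\mathsf T})$;
\item block--block terms $\Tr[C_W^{\ad}\frac14(\Tr\tau\,\idop-\tau^{\mathsf T})C_Y]$.
\end{itemize}
The vector--vector terms are handled with the Levi--Civita contraction Eq.~\eqref{eq:levi-civita-contraction}. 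It converts $\bm a(\cdot)^{\mathsf T}M\bm a(\cdot)$ into trace expressions in $W-W^{\mathsf T}$ and $Y-Y^{\mathsf T}$. Contractions such as $\sum_\mu \epsilon_{\mu jk}\epsilon_{\mu\ell m}\,\tau_{\ldots}$ reduce via Eq.~\eqref{eq:epsilon-contractions}, which gives $\sum_\mu A_\mu XA_\mu^{\ad}=\Tr(X)\idop-X^{\mathsf T}$.

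\emph{Step 3 (right-hand side).} From Eq.~\eqref{eq:G-action}, $G_\mu=V_\mu+\frac12 F V_\mu-\frac12|\Omega\rangle\langle\mu|$. Here $V_\mu|i\rangle=|i\rangle|\mu\rangle$, $F$ is the swap on $BB'$, and $|\Omega\rangle=\sum_j|jj\rangle$. Expanding $\sum_\mu G_\mu\tau G_\mu^{\ad}$ gives nine families of terms. Their matrix elements between $|W\rangle,|Y\rangle$ are traces such as $\Tr[W^{\ad}\tau Y]$, $\Tr[W^{\ad}\tau^{\mathsf T}Y]$-type swapped terms, $\Tr[W^{\ad}Y^{\mathsf T}\tau]$-type mixed terms, and terms involving $\overline{\Tr W}$, $\Tr Y$, and $\Tr\tau$. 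We then match coefficients of each independent invariant against Steps 1--2. A cheap consistency check is to evaluate both sides at $\tau=\idop$, where the symmetric/skew sector decomposition Eq.~\eqref{eq:sector-action} makes the left side easy to compute, and on a few matrix units such as $\tau=E_{00}$ and $\tau=E_{01}$.

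The main obstacle is Step 2: bookkeeping the cross terms involving $\bm a(\tau)$ against the mixed transpose terms on the right, with the correct signs from the $-u$ lower-left block and from complex conjugation in $\cJ(W)^{\ad}$. To control this, I would first verify the identity on the $SO(3)$-covariant structure. Both sides are covariant under real orthogonal conjugation $\tau\mapsto O\tau O^{\mathsf T}$, $Y\mapsto OYO^{\mathsf T}$: $\cJ$, $\Lambda$, and $G_\mu$ (via $\sum_\mu|\mu\rangle\langle\mu|$) are built from invariant tensors. This covariance reduces the check to a small list of invariant bilinear contractions whose coefficients can be compared directly.
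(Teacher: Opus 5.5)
Your plan is essentially the paper's proof: linearity in $\tau$, direct expansion of both sides (your splitting $G_\mu=V_\mu+\tfrac12FV_\mu-\tfrac12|\Omega\rangle\langle\mu|$ is exactly the $a,b,c$ decomposition behind the paper's nine-entry contraction table in Appendix~\ref{app:defect-expansion}), coefficient matching, and complete positivity read off from the outer Kraus form. When you carry out Step~2, note that the block product also contains vector--block cross terms through $\bm a(\tau)$, namely $p^{\mathsf T}C_Y\overline{u_W}-u_Y^{\mathsf T}C_W^{\ad}p$ with $p=\tfrac{1}{2\sqrt2}\bm a(\tau)$; your list of four kinds of terms omits them, but they are nonzero in general and are needed for the coefficients to match.
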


\begin{proof}
Both sides of Eq.~\eqref{eq:cp-defect} are linear in $\tau$.  Expanding on matrix
units gives the same coefficients. The complete entrywise calculation is
recorded in Appendix~\ref{app:defect-expansion}.  Since the right-hand side is a
Kraus representation, complete positivity follows.
\end{proof}

Together with \Cref{lem:lift}, Proposition~\ref{prop:cp-defect} and
Proposition~\ref{prop:variance-defect-equivalence} show that $\cJ$ is a
complete variance-dominating signed lift of $\Lambda$ through $\Lambda^c$.
Thus the exact factorization certifies not only the following scalar variance
comparison, but its stability under every reference system and every joint
Hermitian observable.

The factorization has a more transparent quadratic form.  Define
\begin{equation}
r_\mu(Y)
\coloneqq Y|\mu\rangle+\frac12Y^{\mathsf T}|\mu\rangle
-\frac12\Tr(Y)|\mu\rangle.
\label{eq:rmu}
\end{equation}
Then $r_\mu(Y)=G_\mu^{\ad}|Y\rangle_{\HS}$, so for every $\tau\succeq0$,
\begin{equation}
\langle Y,\Delta(\tau)Y\rangle_{\HS}
=
\frac29\sum_{\mu=0}^2
\langle r_\mu(Y),\tau\,r_\mu(Y)\rangle
\ge0.
\label{eq:defect-sos}
\end{equation}
For a state $\rho$ and a Hermitian observable $Y$, Eq.~\eqref{eq:signed-reconstruction} and
Eq.~\eqref{eq:defect-sos} give the exact variance identity
\begin{equation}
\operatorname{Var}_{\Lambda(\rho)}(Y)
-
\operatorname{Var}_{\Lambda^c(\rho)}\!\left(\cJ(Y)\right)
=
\frac29\sum_{\mu=0}^2
\left\|\rho^{1/2}r_\mu(Y)\right\|_2^2.
\label{eq:qutrit-variance-sos}
\end{equation}
The three residual vectors $r_\mu(Y)$ therefore quantify exactly how much
more variable the receiver observable is than its signed environment
reconstruction.  If $\rho\succ0$, equality in
Eq.~\eqref{eq:qutrit-variance-sos} holds only for scalar observables: the
relations $r_\mu(Y)=0$ for all $\mu$ are equivalent to
$2Y+Y^{\mathsf T}=\Tr(Y)\idop_B$, which forces
$Y=\Tr(Y)\idop_B/3$.

As a concrete example, take $\rho=\idop_A/3$ and
$Y=E_{00}$.  In this case
$\Lambda(\rho)=\idop_B/3$ and
$\Lambda^c(\rho)=\operatorname{diag}(1/2,1/6,1/6,1/6)$, while
\begin{equation}
\cJ(E_{00})
=
\frac13\operatorname{diag}(1,-1,2,2),
\qquad
\operatorname{Var}_{\Lambda(\rho)}(E_{00})=\frac29,
\qquad
\operatorname{Var}_{\Lambda^c(\rho)}\!\left(\cJ(E_{00})\right)
=\frac19.
\label{eq:qutrit-variance-example}
\end{equation}
The negative eigenvalue makes clear why $\cJ(E_{00})$ is not an effect.
It is instead a signed estimator whose expectation agrees with the
receiver probability and whose variance is smaller.

\subsection{Complete information domination and zero capacities}

All ingredients are now in place. We now prove the capacity part of~\Cref{thm:intro-main} without invoking either PPT structure or antidegradability.

\begin{theorem}[Complete information domination and zero capacities]
\label{thm:qutrit-zero-capacity}
For every finite-dimensional reference system $R$, every $n\ge1$, and
all states $\rho_{RA^n},\sigma_{RA^n}$ satisfying
$\supp\rho_{RA^n}\subseteq\supp\sigma_{RA^n}$,
\begin{align}
    D\!\left((\id_R\otimes(\Lambda^c)^{\otimes n})(\rho_{RA^n})
    \middle\Vert
    (\id_R\otimes(\Lambda^c)^{\otimes n})(\sigma_{RA^n})\right)
    \ge
    D\!\left((\id_R\otimes\Lambda^{\otimes n})(\rho_{RA^n})
    \middle\Vert
    (\id_R\otimes\Lambda^{\otimes n})(\sigma_{RA^n})\right).
    \label{eq:qutrit-all-block-order}
\end{align}
Moreover,
\begin{equation}
    P^{(1)}(\Lambda^{\otimes n})
    =Q^{(1)}(\Lambda^{\otimes n})=0
    \qquad(n\ge1),
    \label{eq:qutrit-all-block-zero}
\end{equation}
and consequently
\begin{equation}
    P(\Lambda)=Q(\Lambda)=0.
    \label{eq:qutrit-capacity-zero}
\end{equation}
\end{theorem}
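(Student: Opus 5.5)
The proof assembles results already established, so no new analysis is needed.

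First, \Cref{lem:lift} gives $\cJ^{\ad}\circ\Lambda^c=\Lambda$ with $\cJ$ complex-linear and adjoint preserving. Proposition~\ref{prop:cp-defect} exhibits the outer defect $\Delta$ in Eq.~\eqref{eq:defect} in the Kraus form $\frac29\sum_\mu G_\mu\tau G_\mu^{\ad}$, so $\Delta$ is completely positive. By the implication (iii)$\Rightarrow$(i) of Proposition~\ref{prop:variance-defect-equivalence}, $\cJ$ is therefore a complete variance-dominating signed lift of $\Lambda$ through $\Lambda^c$.

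Next, Eq.~\eqref{eq:faithful-identity-output} supplies the faithfulness hypotheses $\Lambda(\idop_A)=\idop_B\succ0$ and $\Lambda^c(\idop_A)\succ0$. These are exactly what \Cref{thm:criterion} requires, and it yields $\Lambda^c\BKMord\Lambda$. The same faithfulness conditions let us apply \Cref{lem:complete-bkm-to-cre}, which upgrades this to $\Lambda^c\REord\Lambda$. This is \Cref{cor:criterion-consequence} applied to $\cN=\Lambda$.

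For the all-blocklength inequality~\eqref{eq:qutrit-all-block-order}, I would invoke the tensorization statement~\eqref{eq:tensorized-order}, derived from the two-pair lemma~\eqref{eq:hrsf-two-pair}. It gives $(\Lambda^c)^{\otimes n}\REord\Lambda^{\otimes n}$, and unpacking the definition~\eqref{eq:crel-definition} with reference $R$ and input $A^n$ gives precisely~\eqref{eq:qutrit-all-block-order}. One check is needed here: the complement of $\Lambda^{\otimes n}$ is $(\Lambda^c)^{\otimes n}$ with isometry $(V^\Lambda)^{\otimes n}$. This is immediate, and the private and coherent informations do not depend on which isometric extension is chosen.

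Finally, the vanishing statements~\eqref{eq:qutrit-all-block-zero} and~\eqref{eq:qutrit-capacity-zero} follow verbatim from \Cref{cor:anti-less-noisy-capacity}. For the private information, use the Holevo relative-entropy decomposition relative to the average state. For the coherent information, compare $\psi_{RA^n}$ with $\psi_R\otimes\rho_{A^n}$ and use purity of the joint output on $RB^nE^n$. Nonpositivity together with attainment of zero by a trivial ensemble or a pure input gives equality to zero. The regularized capacities then vanish by Eqs.~\eqref{eq:quantum-capacity-regularization} and~\eqref{eq:private-capacity-regularization}.

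The only substantive step is the exact factorization in Proposition~\ref{prop:cp-defect}, which is taken as given (its entrywise verification lies in the appendix). The remainder is bookkeeping of hypotheses.
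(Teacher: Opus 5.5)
Your proposal is correct and follows the paper's proof essentially step for step. \Cref{lem:lift}, \Cref{prop:cp-defect}, and \Cref{prop:variance-defect-equivalence} certify the signed lift; Eq.~\eqref{eq:faithful-identity-output} then feeds \Cref{thm:criterion} and \Cref{lem:complete-bkm-to-cre} to give $\Lambda^c\REord\Lambda$, and hybrid tensorization with \Cref{cor:anti-less-noisy-capacity} yields the all-blocklength inequality and the vanishing capacities, with your remark about the isometry $(V^\Lambda)^{\otimes n}$ being a harmless check that the paper leaves implicit.
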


\begin{proof}
Equation~\eqref{eq:faithful-identity-output} verifies the full-rank
identity-output condition in \Cref{thm:criterion}.  By
\Cref{lem:lift},~\Cref{prop:cp-defect,prop:variance-defect-equivalence}, the map $\cJ$
is a complete variance-dominating signed lift of $\Lambda$ through
$\Lambda^c$.  Apply \Cref{thm:criterion} with $\cM=\Lambda^c$ and $\cN=\Lambda$ to obtain
$\Lambda^c\BKMord\Lambda$.  The BKM-to-relative-entropy
comparison in Lemma~\ref{lem:complete-bkm-to-cre} gives
$\Lambda^c\REord\Lambda$.  The tensor-power statement in
Eq.~\eqref{eq:qutrit-all-block-order} then follows from hybrid
tensorization~\cite[Lemma~3.1]{HircheRouzeFranca2022}.  The
zero-information and zero-capacity conclusions follow from
Corollary~\ref{cor:anti-less-noisy-capacity}.
\end{proof}

\section{Separation from PPT and antidegradable channels}
\label{sec:separation}

We now show, by two independent exact calculations, that the normalized Choi state is
NPT, and the channel is not antidegradable.

\paragraph{NPT Choi state.}

Let $A'$ be a copy of the channel input system $A$.  For any pair $X,Y$ of
qutrit systems equipped with the fixed computational bases, write
\begin{equation}
|\Phi\rangle_{XY}
\coloneqq\frac1{\sqrt3}\sum_{i=0}^2|i\rangle_X|i\rangle_Y,\qquad
\Phi_{XY}\coloneqq|\Phi\rangle\!\langle\Phi|_{XY},
\label{eq:phi3}
\end{equation}
and let
$F_{XY}\coloneqq\sum_{i,j=0}^2\ketbra{i}{j}_X\otimes\ketbra{j}{i}_Y$
be the swap.  On $\cH_A\otimes\cH_B$, define
\begin{equation}
P^-_{AB}\coloneqq\frac{\idop_{AB}-F_{AB}}{2}
\label{eq:pminus}
\end{equation}
as the projector onto $\wedge^2\cH_3$.

A direct application of Eq.~\eqref{eq:channel} gives
\begin{equation}
\omega^\Lambda_{AB}
\coloneqq(\id_A\otimes\Lambda_{A'\to B})(\Phi_{AA'})
=\frac12\Phi_{AB}+\frac16P^-_{AB},
\label{eq:choi-decomposition}
\end{equation}
because
\[
(\id_A\otimes\Lambda)(\Phi)
=\frac12\Phi+\frac1{12}(\idop-F).
\]
Using $\Phi^{T_A}=F/3$ and $F^{T_A}=3\Phi$ then gives
\begin{equation}
(\omega^\Lambda_{AB})^{T_A}
=\frac1{12}\bigl(\idop_{AB}+2F_{AB}-3\Phi_{AB}\bigr),
\qquad
\spec\bigl((\omega^\Lambda_{AB})^{T_A}\bigr)
=
\left\{
\left(\frac14\right)^{\!\times5},
0,
\left(-\frac1{12}\right)^{\!\times3}
\right\}.
\label{eq:choi-pt}
\end{equation}
In particular, the Choi state is NPT.  These formulas agree with the
previously derived noisy Werner--Holevo family expressions
in~\cite[Eqs.~(32)--(34) and (42)]{RoofehKarimipour2025TwoParameter}.
In fact, the Choi state is one-copy distillable.

\begin{proposition}[One-copy distillability]
\label{prop:choi}
The normalized Choi state $\omega^\Lambda_{AB}$ is one-copy distillable.
\end{proposition}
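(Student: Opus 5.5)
The plan is to use the standard one-copy distillability criterion of the Horodecki/DiVincenzo et al.\ type. A bipartite state $\omega_{AB}$ is one-copy distillable if and only if there are rank-two projectors $P_A$ and $Q_B$ such that the (unnormalized) two-qubit state $(P_A\otimes Q_B)\,\omega_{AB}\,(P_A\otimes Q_B)$ is NPT. Equivalently, there must be a vector $|\psi\rangle_{AB}$ of Schmidt rank at most two with $\langle\psi|\omega_{AB}^{T_A}|\psi\rangle<0$. Since two-qubit NPT states are entangled and hence distillable, this is the property to exhibit. So the proof reduces to finding a suitable Schmidt-rank-two witness vector for the explicit operator in Eq.~\eqref{eq:choi-pt}.

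For the witness, I will take the antisymmetric vector
\[
|\psi\rangle_{AB}\coloneqq\frac1{\sqrt2}\bigl(|0\rangle_A|1\rangle_B-|1\rangle_A|0\rangle_B\bigr),
\]
which manifestly has Schmidt rank two. It satisfies $F_{AB}|\psi\rangle=-|\psi\rangle$, and it is orthogonal to $|\Phi\rangle_{AB}$ because $\langle ii|\psi\rangle=0$ for every $i$. Substituting into Eq.~\eqref{eq:choi-pt} gives
\[
\langle\psi|(\omega^\Lambda_{AB})^{T_A}|\psi\rangle=\frac1{12}\bigl(1-2-0\bigr)=-\frac1{12}<0.
\]
This is consistent with the listed spectrum: the three-dimensional $-1/12$ eigenspace is exactly $\wedge^2\cH_3$, which is orthogonal to $\Phi$.

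To finish, I will take $P_A=Q_B=\ketbra00+\ketbra11$. Then $|\psi\rangle$ lies in the range of $P_A\otimes Q_B$, and partial transposition on $A$ commutes with compression by $P_A\otimes Q_B$, because $P_A$ is real and diagonal in the computational basis. Hence the projected two-qubit operator $\widetilde\omega=(P\otimes Q)\omega^\Lambda(P\otimes Q)$ satisfies
\[
\langle\psi|\widetilde\omega^{T_A}|\psi\rangle=\langle\psi|(\omega^\Lambda)^{T_A}|\psi\rangle<0 .
\]
In particular $\widetilde\omega\neq0$, and after normalization it is an NPT two-qubit state. By the Horodecki result that every entangled two-qubit state is distillable, $\omega^\Lambda$ is one-copy distillable.

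The only point requiring care is this commutation of $T_A$ with the local compression, together with the correct statement of the criterion being used. Both are routine, so I expect no real obstacle. The main computational content is already contained in Eq.~\eqref{eq:choi-pt}.
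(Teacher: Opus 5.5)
Your proof is correct and matches the paper's: the paper takes the same antisymmetric Schmidt-rank-two vector $|\psi^-_{01}\rangle$, gets $\langle\psi^-_{01}|(\omega^\Lambda_{AB})^{T_A}|\psi^-_{01}\rangle=-\tfrac1{12}$ from Eq.~\eqref{eq:choi-pt}, and cites the Schmidt-rank-two criterion. Your explicit compression onto $\operatorname{span}\{|0\rangle,|1\rangle\}$ on each side, followed by two-qubit distillability, just spells out why that criterion holds.
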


\begin{proof}
The antisymmetric vector
\[
|\psi^-_{01}\rangle_{AB}
\coloneqq\frac{|0\rangle_A|1\rangle_B-|1\rangle_A|0\rangle_B}{\sqrt2}
\]
has Schmidt rank two and satisfies
\[
\langle\psi^-_{01}|_{AB}
(\omega^\Lambda_{AB})^{T_A}
|\psi^-_{01}\rangle_{AB}=-\frac1{12}<0.
\]
The Schmidt-rank-two criterion therefore proves one-copy
distillability~\cite{HorodeckiDistillation1998}.
\end{proof}

\begin{remark}[One-way versus two-way distillation]
Let $D_{\to}$ denote distillable entanglement
under one-way classical communication from $A$ to $B$.  Although
$\omega^\Lambda_{AB}$ is one-copy distillable, its one-way distillable
entanglement vanishes. Indeed, one copy of $\omega^\Lambda_{AB}$ can be generated from one use of $\Lambda$ by preparing $\Phi_{AA'}$ and transmitting $A'$ through the channel.  Consequently, any one-way distillation protocol for
$(\omega^\Lambda_{AB})^{\otimes n}$ induces an entanglement-generation
protocol for $\Lambda^{\otimes n}$ assisted by forward classical
communication.  Hence
\begin{equation}
    D_{\to}(\omega^\Lambda_{AB})
    \le Q_{\to}(\Lambda),
\end{equation}
where $Q_{\to}$ denotes the forward-classical-assisted quantum capacity.
Forward classical communication does not increase quantum
capacity~\cite{BarnumKnillNielsen2000}, and therefore
\begin{equation}
    D_{\to}(\omega^\Lambda_{AB})
    \le Q_{\to}(\Lambda)
    =Q(\Lambda)
    =0.
\end{equation}
\end{remark}

\paragraph{Failure of antidegradability.}

A channel is antidegradable if and only if its normalized Choi state
admits a symmetric two-extension on the output
system~\cite[Lemma~5]{MyhrLutkenhaus2009}.  We exclude such an extension
by constructing a dual witness.  The underlying principle is elementary.
For a Hermitian operator $Z_{AB}$, define its symmetrized lift to
$AB_1B_2$ by
\begin{equation}
\mathfrak H(Z)
\coloneqq
\frac12\left[
Z_{AB_1}\otimes\idop_{B_2}
+
\mathsf S_{12}
\bigl(Z_{AB_1}\otimes\idop_{B_2}\bigr)
\mathsf S_{12}
\right],
\label{eq:symmetrized-witness-lift}
\end{equation}
where $\mathsf S_{12}$ swaps $B_1$ and $B_2$.  If $\sigma_{AB}$ has a
symmetric two-extension $\eta_{AB_1B_2}$, then
\begin{equation}
\Tr\!\left[\mathfrak H(Z)\eta\right]
=
\frac12\Tr(Z\eta_{AB_1})
+\frac12\Tr(Z\eta_{AB_2})
=
\Tr(Z\sigma).
\label{eq:two-extension-duality}
\end{equation}
Consequently, $\mathfrak H(Z)\succeq0$ implies
$\Tr(Z\sigma)\ge0$ for every two-extendible $\sigma$; finding one state
with $\Tr(Z\sigma)<0$ rules out a two-extension.

The Choi state in Eq.~\eqref{eq:choi-decomposition} is invariant under
simultaneous spin-one rotations on $A$ and $B$.  Averaging a feasible
witness over the same rotations preserves both
$\mathfrak H(Z)\succeq0$ and its expectation on $\omega^\Lambda$.
The multiplicity-free decomposition of two spin-one systems therefore
motivates a witness that is scalar on the spin-$0$, spin-$1$, and
spin-$2$ sectors.  Define the symmetric-traceless projector
\begin{equation}
P^{\mathrm{st}}_{AB}
\coloneqq
\frac{\idop_{AB}+F_{AB}}{2}-\Phi_{AB},
\label{eq:symmetric-traceless-projector}
\end{equation}
and consider
\begin{equation}
W(w_0,w_1)
\coloneqq
w_0\Phi_{AB}
+w_1P^-_{AB}
+P^{\mathrm{st}}_{AB}.
\label{eq:extension-witness-family}
\end{equation}
The coefficient of $P^{\mathrm{st}}$ has been normalized to one by
positive rescaling.  The block calculation in
Appendix~\ref{app:extension-witness} shows that
\begin{equation}
\mathfrak H\!\left(W(w_0,w_1)\right)\succeq0
\quad\Longleftrightarrow\quad
w_1\ge0,\qquad
w_0\ge-\frac{4w_1}{5w_1+3}.
\label{eq:extension-witness-feasible-region}
\end{equation}
Moreover, Eq.~\eqref{eq:choi-decomposition} gives
$\Tr[W(w_0,w_1)\omega^\Lambda]=(w_0+w_1)/2$.  For fixed $w_1$, this
expression is minimized at the lower boundary in
Eq.~\eqref{eq:extension-witness-feasible-region}.  Writing $x=w_1$, the
remaining objective is
\begin{equation}
f(x)
=
\frac{x(5x-1)}{2(5x+3)},
\qquad
f'(x)
=
\frac{25x^2+30x-3}{2(5x+3)^2}.
\label{eq:extension-witness-objective}
\end{equation}
The unique minimizer on $x\ge0$ therefore yields
\begin{equation}
\begin{aligned}
    w_0&\coloneqq\frac{2\sqrt3-4}{5},
    &
    w_1&\coloneqq\frac{2\sqrt3-3}{5},
    \\
    W_{AB}
    &\coloneqq
    w_0\Phi_{AB}
    +w_1P^-_{AB}
    +P^{\mathrm{st}}_{AB}.
\end{aligned}
\label{eq:extension-witness-W}
\end{equation}
On $AB_1B_2$, let
\begin{equation}
H
\coloneqq
\mathfrak H(W).
\label{eq:extension-witness-H}
\end{equation}

\begin{proposition}[Exact two-extension witness]
\label{prop:extension-witness}
The operator $H$ is positive semidefinite, whereas
\begin{equation}
    \Tr(W\omega^\Lambda)
    =
    \frac{4\sqrt3-7}{10}
    <0.
    \label{eq:extension-witness-negative}
\end{equation}
\end{proposition}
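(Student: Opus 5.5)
The plan is to check the two claims separately. The trace value follows from orthogonality of the three sector projectors. Positivity of $H$ reduces to the feasibility criterion of Eq.~\eqref{eq:extension-witness-feasible-region}, applied at the specific pair $(w_0,w_1)$ of Eq.~\eqref{eq:extension-witness-W}.

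\emph{Trace value.} Write $\omega^\Lambda=\tfrac12\Phi+\tfrac16P^-$ as in Eq.~\eqref{eq:choi-decomposition}. The projectors $\Phi$, $P^-$, $P^{\mathrm{st}}$ are mutually orthogonal, and $\Tr P^-=3$. Hence
\[
\Tr(\Phi\,\omega^\Lambda)=\tfrac12,\qquad
\Tr(P^-\omega^\Lambda)=\tfrac16\cdot 3=\tfrac12,\qquad
\Tr(P^{\mathrm{st}}\omega^\Lambda)=0,
\]
which gives $\Tr(W\omega^\Lambda)=(w_0+w_1)/2$. Substituting Eq.~\eqref{eq:extension-witness-W} gives $w_0+w_1=(4\sqrt3-7)/5$, so $\Tr(W\omega^\Lambda)=(4\sqrt3-7)/10$. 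This is negative because $(4\sqrt3)^2=48<49$.

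\emph{Positivity of $H$.} First, $w_1=(2\sqrt3-3)/5>0$, since $12>9$. Next, $5w_1+3=2\sqrt3$, so the lower boundary in Eq.~\eqref{eq:extension-witness-feasible-region} is
\[
-\frac{4w_1}{5w_1+3}=-\frac{2(2\sqrt3-3)}{5\sqrt3}.
\]
I would then verify $w_0=(2\sqrt3-4)/5$ against this bound by rationalizing. The inequality reduces to $\sqrt3(2\sqrt3-4)\ge-2(2\sqrt3-3)$, that is, $6-4\sqrt3\ge 6-4\sqrt3$, which holds with equality. So $(w_0,w_1)$ lies on the boundary of the feasible region, and $H=\mathfrak H(W)\succeq0$ follows.

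\emph{Block calculation underlying Eq.~\eqref{eq:extension-witness-feasible-region}.} This is the only genuinely nontrivial step and the one I expect to be the main obstacle. I would carry it out as follows.
\begin{itemize}[label=\textendash]
\item $H$ commutes with $U^{\otimes3}$ for every spin-one rotation $U\in SO(3)$, and with the swap $\mathsf S_{12}$.
\item $1\otimes1\otimes1$ decomposes into total spin $0,1,2,3$ with multiplicities $1,3,2,1$. So $H$ is block diagonal: a scalar on spin $0$ and spin $3$, a $3\times3$ block on spin $1$, and a $2\times2$ block on spin $2$.
\item Each block further splits into $\mathsf S_{12}$-even and $\mathsf S_{12}$-odd parts.
\item Compute each block entry from recoupling coefficients (equivalently, $6j$ symbols) that relate the $AB_1$ coupling to the $AB_2$ coupling.
\item Require every resulting small matrix to be positive semidefinite. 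I expect the binding constraint, $w_0(5w_1+3)+4w_1\ge0$, to come from the determinant of one $2\times2$ block.
\end{itemize}
As an independent check, I would also confirm numerically that $H$, a $27\times27$ matrix with the exact surd entries, has smallest eigenvalue $0$. This is consistent with the point lying exactly on the feasibility boundary.
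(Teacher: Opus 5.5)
Your proposal is correct and follows the paper's route: the trace value comes from orthogonality of the three pair-spin projectors, and positivity of $H$ comes from the $SO(3)$ and $\mathsf S_{12}$ block decomposition via $6j$ recoupling. In that decomposition the binding constraint is the determinant condition $5z_0z_1+3z_0z_2+4z_1z_2\ge0$ on the $\mathsf S_{12}$-even $s\in\{0,2\}$ sub-block of the $J=1$ sector, and the only cosmetic difference is that you verify directly that $(w_0,w_1)$ lies on the boundary $w_0=-4w_1/(5w_1+3)$ (via $5w_1+3=2\sqrt3$), whereas the paper obtains this point as the minimizer of $(w_0+w_1)/2$ along that boundary.
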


The exact block-diagonal verification of
\Cref{prop:extension-witness} is given in Appendix~\ref{app:extension-witness}.  Suppose, for contradiction, that $\eta_{AB_1B_2}$ were a symmetric two-extension of $\omega^\Lambda_{AB}$.  Then $\eta_{AB_1}=\eta_{AB_2}=\omega^\Lambda_{AB}$, and hence
\begin{align}
0 \le \Tr(H\eta) =
\frac12\Tr(W\eta_{AB_1})
+\frac12\Tr(W\eta_{AB_2})
=\Tr(W\omega^\Lambda)
<0,
\label{eq:extension-witness-contradiction}
\end{align}
a contradiction.  Therefore $\omega^\Lambda_{AB}$ is not two-extendible on $B$, and $\Lambda$ is not antidegradable.

\section{Discussion}
\label{sec:discussion}

The qutrit channel $\Lambda_{A\to B}$ provides an exact separation between zero quantum or private capacity and the two standard structural explanations.  Its normalized Choi state is NPT, indeed one-copy
distillable, and the channel is not antidegradable.  Nevertheless, for every blocklength, every reference system, and every support-compatible input pair, the environment output dominates the receiver output in Umegaki relative entropy.  The resulting private- and coherent-information
inequalities give
$P(\Lambda)=Q(\Lambda)=0$.

For a channel $\cN$ with complement $\cN^c$, antidegradability gives a
physical factorization $\cN=\cD\circ\cN^c$ with $\cD$ CPTP: post-processing
the complementary output simulates the receiver output, including its
correlations with an untouched reference.  The mechanism here is different.
The map $\cJ$ acts in the Heisenberg picture and assigns
to each receiver observable an environment observable with the same mean.
Eq.~\eqref{eq:qutrit-variance-sos} shows more: the environment observable has
no larger variance. Thus the complementary output contains a variance-controlled statistical
shadow of every receiver observable.  This shadow is only observable-by-observable.  Since $\cJ$ is nonpositive, it is neither a channel from the environment to the receiver nor a simultaneous simulation
of receiver POVMs.

We note that antidegrading post-processing is a special case of a complete variance-dominating signed lift.  The qutrit lift is nonpositive, and \Cref{sec:separation} proves that no physical antidegrading map exists.
For finite-dimensional channel pairs whose identity outputs are full rank,
the resulting hierarchy, whose first implication is strict, is
\begin{equation}
\begin{array}{c}
    \text{antidegradability}\\[-0.2em]
    \Downarrow\qquad\not\Uparrow\\[-0.2em]
    \text{complete variance-dominating signed lift of the receiver}\\
    \text{through the complementary channel}\\[-0.2em]
    \Downarrow\\[-0.2em]
    \text{complete less-noisy order of the complement}\\
    \Downarrow\\
    \text{all-block private- and coherent-information domination}\\
    \Downarrow\\
    P(\cN)=Q(\cN)=0.
\end{array}
\label{eq:hierarchy}
\end{equation}
The first implication is strict for $\Lambda$ by
\Cref{lem:lift}~\Cref{prop:cp-defect,prop:variance-defect-equivalence} and~\Cref{sec:separation}.

\section*{Acknowledgment} This work was partially supported by the National Natural Science Foundation of China (Grant Nos.~92576114, 12447107) and the Guangdong Provincial Quantum Science Strategic Initiative (Grant Nos.~GDZX2403008, GDZX2503001, and GDZX2403001).

\bibliographystyle{alpha}
\bibliography{ref}

\newpage
\addtocontents{toc}{\protect\setcounter{tocdepth}{0}}
\appendix
\begin{center}
\LARGE \textbf{Appendix}
\end{center}
\section{Entrywise verification of the CP defect}
\label{app:defect-expansion}

For completeness, we verify the identity in
Eq.~\eqref{eq:cp-defect} directly.  In the
identification
$E^B_{jk}\leftrightarrow|j\rangle_B|k\rangle_{B'}$, the matrix elements of
$G_\mu$ are
\begin{equation}
(G_\mu)_{(j,k),i}
=
\delta_{ji}\delta_{k\mu}
+\frac12\delta_{j\mu}\delta_{ki}
-\frac12\delta_{\mu i}\delta_{jk}.
\label{eq:G-components}
\end{equation}
Substitution of Eq.~\eqref{eq:channel}, Eq.~\eqref{eq:complement-block}, and
Eq.~\eqref{eq:J-lift} into Eq.~\eqref{eq:defect} gives
\begin{align}
[\Delta(\tau)]_{(j,k),(p,q)}
=\frac29\Bigl[{}
&\delta_{kq}\tau_{jp}
+\frac12\delta_{kp}\tau_{jq}
-\frac12\delta_{pq}\tau_{jk}
+\frac12\delta_{jq}\tau_{kp}\notag\\
&+\frac14\delta_{jp}\tau_{kq}
-\frac14\delta_{pq}\tau_{kj}
-\frac12\delta_{jk}\tau_{qp}
-\frac14\delta_{jk}\tau_{pq}\notag\\
&+\frac14\delta_{jk}\delta_{pq}\Tr \tau
\Bigr].
\label{eq:defect-components}
\end{align}
On the other hand,
\begin{equation}
\left[
\sum_{\mu=0}^2G_\mu \tau G_\mu^{\ad}
\right]_{(j,k),(p,q)}
=
\sum_{\mu,i,\ell}
(G_\mu)_{(j,k),i}
\tau_{i\ell}
\overline{(G_\mu)_{(p,q),\ell}}.
\label{eq:GSG}
\end{equation}
Although the entries in Eq.~\eqref{eq:G-components} are real, the
conjugate records the adjoint in the general component formula.
For transparency, write the three summands in
Eq.~\eqref{eq:G-components} as
\[
a^\mu_{(j,k),i}=\delta_{ji}\delta_{k\mu},
\qquad
b^\mu_{(j,k),i}=\frac12\delta_{j\mu}\delta_{ki},
\qquad
c^\mu_{(j,k),i}=-\frac12\delta_{\mu i}\delta_{jk}.
\]
The row label in the following table specifies the summand to the left
of $\tau$, and the column label specifies the summand to its right.
Carrying out the sums over $\mu,i,\ell$ in
Eq.~\eqref{eq:GSG} gives
\begin{equation}
\begin{array}{c|ccc}
    & a & b & c\\ \hline
    a&
    \delta_{kq}\tau_{jp}&
    \frac12\delta_{kp}\tau_{jq}&
    -\frac12\delta_{pq}\tau_{jk}\\[1mm]
    b&
    \frac12\delta_{jq}\tau_{kp}&
    \frac14\delta_{jp}\tau_{kq}&
    -\frac14\delta_{pq}\tau_{kj}\\[1mm]
    c&
    -\frac12\delta_{jk}\tau_{qp}&
    -\frac14\delta_{jk}\tau_{pq}&
    \frac14\delta_{jk}\delta_{pq}\Tr\tau
\end{array}.
\label{eq:defect-nine-contractions}
\end{equation}
Reading the table row by row reproduces exactly the bracket in
Eq.~\eqref{eq:defect-components}.  This proves the identity in
Eq.~\eqref{eq:cp-defect}.

The Choi operator of the defect map is also particularly simple.  One checks
\begin{equation}
\Tr(G_\mu^{\ad}G_\nu)=4\delta_{\mu\nu}.
\label{eq:G-orthogonality}
\end{equation}
Thus the vectors
$|g_\mu\rangle=\frac12\operatorname{vec}(G_\mu)$ are orthonormal.
Denoting the Choi operator of $\Delta$ by $C_\Delta$, one has
\begin{equation}
C_\Delta
=
\frac89\sum_{\mu=0}^2|g_\mu\rangle\!\langle g_\mu|
=
\frac89P_{\mathcal G},
\label{eq:defect-choi}
\end{equation}
where $\mathcal G=\Span\{\operatorname{vec}(G_0),
\operatorname{vec}(G_1),\operatorname{vec}(G_2)\}$.


\section{Proof of the exact two-extension witness}
\label{app:extension-witness}

\begin{proof}[Proof of \Cref{prop:extension-witness}]
We derive the witness and verify its positivity in the same
calculation.  For real parameters $z_0,z_1,z_2$, start from the most
general rotation-invariant ansatz
\begin{equation}
    Z(z_0,z_1,z_2)
    \coloneqq
    z_0\Phi+z_1P^-+z_2P^{\mathrm{st}}.
    \label{eq:general-extension-witness}
\end{equation}
Identify the computational basis of each qutrit with the Cartesian
real basis of the spin-one representation.  In this basis,
$\Phi$, $P^-$, and $P^{\mathrm{st}}$ are precisely the pair
projectors of total spin $0$, $1$, and $2$, respectively.  Thus the
three coefficients in Eq.~\eqref{eq:general-extension-witness} are the
three eigenvalues of $Z$.

We next reduce the positivity of $\mathfrak H(Z)$ from a
$27$-dimensional calculation to four small multiplicity blocks.
Couple $B_1B_2$ first to spin $s$, and then couple this pair with $A$
to total spin $J$.  The allowed intermediate-spin sets are
\begin{equation}
    I_0=\{1\},\qquad
    I_1=\{0,1,2\},\qquad
    I_2=\{1,2\},\qquad
    I_3=\{2\}.
    \label{eq:allowed-spin-sectors}
\end{equation}
With the standard angular-momentum convention, the change from the
basis in which the $AB_2$ pair has spin $t$ to the basis in which the $B_1B_2$ pair has spin $s$ is expressed through Wigner $6j$
symbols~\cite[Chapter~34]{NISTDLMF}:
\begin{equation}
    U^{(J)}_{st}
    =(-1)^{J+3}\sqrt{(2s+1)(2t+1)}
    \begin{Bmatrix}
        1&1&s\\
        1&J&t
    \end{Bmatrix},
    \qquad s,t\in I_J.
    \label{eq:extension-recoupling-6j}
\end{equation}
In increasing order of the spins in $I_J$, the two nontrivial
recoupling matrices are
\begin{equation}
    \begin{aligned}
        U^{(1)}
        &=
        \begin{pmatrix}
            \frac13&-\frac1{\sqrt3}&\frac{\sqrt5}{3}\\
            -\frac1{\sqrt3}&\frac12&\frac{\sqrt{15}}6\\
            \frac{\sqrt5}{3}&\frac{\sqrt{15}}6&\frac16
        \end{pmatrix},\\[1mm]
        U^{(2)}
        &=
        \begin{pmatrix}
            -\frac12&\frac{\sqrt3}{2}\\
            \frac{\sqrt3}{2}&\frac12
        \end{pmatrix},
    \end{aligned}
    \label{eq:extension-recoupling-matrices}
\end{equation}
while $U^{(0)}=U^{(3)}=(1)$.  All four matrices are real
orthogonal.

Let
$D_J\coloneqq\operatorname{diag}(z_t)_{t\in I_J}$ and
$E_J\coloneqq\operatorname{diag}((-1)^s)_{s\in I_J}$.
In the $B_1B_2$-coupled basis, the restriction of the $AB_2$ term
is $U^{(J)}D_J(U^{(J)})^{\mathsf T}$.  Swapping $B_1$ and $B_2$
multiplies the spin-$s$ subspace by $(-1)^s$, so the restriction of
the $AB_1$ term is obtained by conjugating with $E_J$.  It follows
directly from Eq.~\eqref{eq:symmetrized-witness-lift} that
\begin{equation}
    \mathfrak H(Z)^{(J)}
    =
    \frac12\left[
    U^{(J)}D_J(U^{(J)})^{\mathsf T}
    +
    E_JU^{(J)}D_J(U^{(J)})^{\mathsf T}E_J
    \right].
    \label{eq:extension-H-block-formula}
\end{equation}
Multiplying the matrices gives the following blocks, now for
arbitrary $z_0,z_1,z_2$:
\begin{align}
    \mathfrak H(Z)^{(0)}
    &=(z_1),\notag\\
    \mathfrak H(Z)^{(1)}
    &=
    \begin{pmatrix}
        \frac{z_0+3z_1+5z_2}{9}
        &0&
        \frac{\sqrt5(2z_0-3z_1+z_2)}{18}\\
        0&
        \frac{4z_0+3z_1+5z_2}{12}
        &0\\
        \frac{\sqrt5(2z_0-3z_1+z_2)}{18}
        &0&
        \frac{20z_0+15z_1+z_2}{36}
    \end{pmatrix},
    \qquad [s=(0,1,2)],\notag\\
    \mathfrak H(Z)^{(2)}
    &=
    \operatorname{diag}\left(
        \frac{z_1+3z_2}{4},
        \frac{3z_1+z_2}{4}
    \right),
    \qquad [s=(1,2)],\notag\\
    \mathfrak H(Z)^{(3)}
    &=(z_2).
    \label{eq:extension-H-general-blocks}
\end{align}
This formula also explains the normalization used in the main text.
Positivity forces $z_1,z_2\ge0$ from the $J=0$ and $J=3$ blocks.
The determinant of the $s=(0,2)$ principal block in the $J=1$
sector is
\begin{equation}
    \det\mathfrak H(Z)^{(1)}_{\{0,2\}}
    =
    \frac{5z_0z_1+3z_0z_2+4z_1z_2}{12}.
    \label{eq:extension-even-determinant}
\end{equation}
If $z_2=0$ and $z_1>0$, this determinant forces $z_0\ge0$; if
$z_2=z_1=0$, the bottom-right entry of the $J=1$ block again forces
$z_0\ge0$.  In either case
$\Tr[Z\omega^\Lambda]=(z_0+z_1)/2\ge0$.  Hence a separating witness
must have $z_2>0$, and positive rescaling allows us to set
$z_2=1$ without loss.

Put $x\coloneqq z_1\ge0$.  By
Eq.~\eqref{eq:extension-even-determinant}, positivity requires
\begin{equation}
    z_0\ge b(x)
    \coloneqq-\frac{4x}{5x+3}.
    \label{eq:extension-boundary}
\end{equation}
This condition is also sufficient.  Indeed, all three diagonal
entries of the $J=1$ block increase with $z_0$, and at
$z_0=b(x)$ they are, from top to bottom,
\begin{equation}
    \frac{5(x+1)^2}{3(5x+3)},\qquad
    \frac{15x^2+18x+15}{12(5x+3)},\qquad
    \frac{(5x-1)^2}{12(5x+3)}.
    \label{eq:extension-boundary-diagonals}
\end{equation}
They are nonnegative, the even-sector determinant vanishes on the
boundary and is nonnegative above it, and the $J=2$ block is
positive for $x\ge0$.  This proves the feasible-region
characterization in
Eq.~\eqref{eq:extension-witness-feasible-region}.

For fixed $x$, the expectation
$\Tr[Z\omega^\Lambda]=(z_0+x)/2$ is increasing in $z_0$, so a
minimizer lies on $z_0=b(x)$.  The resulting function and its
derivative are precisely those in
Eq.~\eqref{eq:extension-witness-objective}.  Since
$f'(x)$ changes sign only once on $x\ge0$, its unique minimizer is
\begin{equation}
    x=\frac{2\sqrt3-3}{5},
    \qquad
    z_0=b(x)=\frac{2\sqrt3-4}{5}.
    \label{eq:extension-optimal-coefficients}
\end{equation}
These are $w_1$ and $w_0$ in
Eq.~\eqref{eq:extension-witness-W}.  The feasible-region
characterization proves $H=\mathfrak H(W)\succeq0$ exactly.  Finally,
using Eq.~\eqref{eq:choi-decomposition} and the orthogonality of the
three pair sectors,
\begin{equation}
    \Tr(W\omega^\Lambda)
    =
    \frac{w_0+w_1}{2}
    =
    \frac{4\sqrt3-7}{10}
    <0,
\end{equation}
where the strict inequality follows from
$(4\sqrt3)^2=48<49=7^2$.
\end{proof}

\end{document}